\documentclass[preprint]{iacrtrans}

\usepackage{fancyhdr}
\usepackage[utf8]{inputenc}
\usepackage[dvipsnames]{xcolor}
\usepackage{graphicx}
\usepackage{subcaption}
\usepackage{hyperref}
\usepackage{graphicx} 
\usepackage{amsmath}
\usepackage{amssymb}
\usepackage{physics}
\usepackage{enumitem}
\usepackage{cleveref}
\usepackage{booktabs}
\usepackage{hyperref}
\usepackage{ dsfont }
\usepackage{stmaryrd}
\usepackage{mathtools}
\usepackage{algorithm}
\usepackage{algorithmic}
\usepackage[normalem]{ulem}
\usepackage[export]{adjustbox}

\usepackage{tcolorbox}

\usepackage{mdframed}

\usepackage[colorinlistoftodos]{todonotes}

\newcommand{\M}{\mathcal{M}}

\newcommand{\F}{\mathcal F}
\newcommand{\G}{\mathcal G}
\renewcommand{\P}{\mathcal P}
\newcommand{\eps}{\epsilon}

\newcommand{\D}{\mathcal D}

\renewcommand{\S}{\mathcal S}

\newcommand{\SIM}{\mathcal {SIM}}

\newcommand{\SD}{\mathcal {SD}}
\newcommand{\PP}{\mathcal {PP}}

\newcommand{\R}{\mathcal R}

\newcommand{\probP}{\text{I\kern-0.15em P}}

\newcommand{\kb}[1]{\ket{#1}\bra{#1}}

\newcommand{\bea}{\begin{eqnarray}} %
\newcommand{\eea}{\end{eqnarray}}

\newtheorem{assumption}{Assumption}

\Crefname{subassumption}{Assumption}{Assumptions}
\Crefname{assumption}{Assumption}{Assumptions}

\Crefname{subassumptionTwo}{Assumption}{Assumptions}
\Crefname{proposition}{Proposition}{Propositions}
\Crefname{property}{Property}{Properties}

\title{On the cryptographic potential of single-qubit rotations\\}

\author{Alex B. Grilo\inst{1} \and Lucas Hanouz\inst{1,2,*} \and Anne Marin\inst{2}}
\institute{
Sorbonne University, CNRS, LIP6, Paris, France
\and
VeriQloud, Paris, France
}

\date{March 2025}

\begin{document}
\maketitle

\vspace{-1.2cm}
\begin{center}
    \small{$^{*}$ Corresponding author: lucas.hanouz@lip6.fr}
\end{center}

\begin{abstract}

In the domain of quantum communication, cryptographic protocols often require users to have access to trusted qubit sources or detectors. Recently, it was shown that on an architecture called the Qline, several protocols can equivalently be performed by parties capable only of single-qubit rotations. 

In this work, we introduce two composably secure constructions that together show how in most quantum cryptographic protocols, parties traditionally required to perform trusted qubit preparation or measurement can delegate these tasks to an untrusted provider and instead rely on a trusted single-qubit rotation device.

Our first construction implements single-qubit measurement and is universally applicable across any context. In contrast, our second construction, which addresses qubit preparation, relies on specific assumptions regarding the underlying protocol. We show, however, that these assumptions are inherently satisfied by the vast majority of common quantum cryptographic protocols.

A notable consequence of our results is the formal validation of the Qline as a versatile architecture capable of supporting a wide range of single-qubit protocols.
\end{abstract}

\section{Introduction} \label{sec:intro}

Quantum cryptography represents a paradigm shift in secure communications, leveraging the principles of quantum mechanics to relax assumptions required in classical cryptography or enable entirely new cryptographic paradigms.

This field effectively arose from the work of Bennett and Brassard on Quantum Key Distribution (QKD)~\cite{BB84}. It showed that using quantum resources, two distant authenticated parties can agree on a shared secret bit string without relying on any computational assumption, a task that is provably impossible using only classical resources. By transmitting information encoded into a quantum state of a physical system, one can exploit the fact that any attempt to gain knowledge on these states inevitably disturbs them, thereby establishing a communication channel where any eavesdropping attempts can be detected.
This paradigm has paved the way for more advanced primitives, establishing QKD not just as a specific application, but as the catalyst for the entire domain of quantum-secure communication.

Quantum resources are, however,  technologically challenging to manipulate. The hardware required to prepare, transmit, modify, or measure quantum states remains expensive and imperfect; furthermore, current technology cannot properly handle high-dimensional entangled states or reliably store practical quantities of quantum information over meaningful timescales.

By far the most practical and least challenging quantum cryptography protocols to implement are the so-called \textit{prepare-and-measure} (PM) protocols. In prepare-and-measure scenarios, analogous to the initial proposition of QKD, users are only required to either prepare single-qubit quantum states and immediately transmit them, or receive single-qubit states and immediately measure them.
Yet, even these less stringent conditions give rise to serious challenges when realistic, full-scale networks are considered. In particular, the cost of the required hardware can become prohibitive.

To address these challenges and enhance the connectivity of quantum communication networks, a specific architecture called Qline has been proposed~\cite{doosti-hanouz23establishing}. The Qline consists of the simplest prepare-and-measure setup where a single-qubit source and detector are linked, but with intermediate nodes added in between. These nodes are limited to performing single-qubit rotations and are not required to handle state preparation or measurement.

The Qline has successively been shown to support several interesting cryptographic protocols. Clementi \textit{et al.} demonstrated a Quantum-enhanced Classical multiparty computation protocol on the Qline \cite{CPEW17}. Polacchi \textit{et al.}~\cite{polacchi2023multi} introduced a protocol for secure multi-client delegated quantum computing for a Qline connected to a quantum computer. Doosti \textit{et al.}~\cite{doosti-hanouz23establishing} showed that any pair of players can establish symmetric keys with the same level of security as QKD, provided that the end-nodes are trusted, an assumption which was later removed by Grilo, Hanouz and Marin~\cite{SecretSharingQline} by showing the security of the more general task of establishing shares of additive secret sharing.

Beyond the specific scope of the Qline architecture, devices capable of performing single-qubit rotations have been studied in broader contexts such as for delegating~\cite{Qenclave} and verifying~\cite{KLMO24} quantum computation.

Collectively, these results suggest that the Qline possesses the versatility to support an even broader range of applications.
Furthermore, the fact that intermediate users on the Qline can perform cryptographic tasks without requiring trusted qubit sources or detectors raises a question of independent interest: what is the cryptographic potential of single-qubit rotation devices, and which protocols do they enable? This work addresses these questions and provides the following insight: almost any single-qubit protocol involving a trusted device for qubit preparation or measurement can instead be performed using an untrusted version of that device coupled with a trusted single-qubit rotation device.

\medskip

Our main contributions consist of two constructions, tailored respectively to qubit sources and detectors:
\begin{enumerate}
    \item The first construction, presented in~\Cref{sec:Detector}, implements a trusted single-qubit measurement from a trusted single-qubit rotation device followed by a (possibly remote) untrusted detector.
    \item The second construction, detailed in~\Cref{sec:Source}, describes some sufficient conditions for cryptographic protocols for which we can replace a trusted qubit source by a (possibly remote) untrusted single-qubit source and a trusted single-qubit rotation device.
\end{enumerate}

The sufficient conditions of protocols that we consider in our second contribution are quite extensive and encompass in particular several known cryptographic protocols. Specifically, they include protocols that admit an entanglement-based version that remains secure when the entangled states are provided by an untrusted party. We provide a detailed analysis in \Cref{sec:examples} showing that most common protocols for QKD and Quantum Oblivious Transfer satisfy these properties.

Conceptually, our constructions can be understood as compilers turning single-qubit quantum cryptographic protocols that require the transmission or reception of single qubits into equally secure variants where the user instead needs a single-qubit rotation device, while the tasks of qubit preparation and measurement are delegated to untrusted parties.

We carry out our security analysis within the Abstract Cryptography framework~\cite{AbstractCrypto}, which endows our constructions with composability. This property is crucial for the theoretical coherence of our framework, as it ensures that our results can be seamlessly integrated into broader cryptographic constructions while preserving the soundness of existing security proofs.

The main consequence of our work is the extension of most single-qubit protocols to the Qline architecture. Specifically, our framework allows any intermediate node to participate in these protocols as if they possessed a trusted source or detector, thereby significantly expanding the functional capabilities of the Qline.

By shifting the requirement of trust from preparation and measurement devices to rotation units, our framework allows for a functional "standardization" of user hardware. In this model, while physical sources and detectors remain necessary for the protocol's execution, they no longer need to be trusted. Consequently, these components could be provided by a third-party quantum-communication service provider, while the users maintain security through standardized rotation-based devices that remain invariant across a wide range of applications.

We must emphasize, however, that our construction still presupposes a trusted quantum device, which contrasts with the traditional definition of device-independent protocols that typically eliminate the need for any quantum-level trust. 
Moreover, our theorems do not avoid side-channel attacks, and the single-qubit rotation devices must satisfy assumptions comparable to those required for qubit sources and detectors in standard protocols. 
In the case of single-qubit rotation devices, this consists of the implicit assumption that the rotations are guaranteed to be applied on single qubits, as opposed to higher-dimensional systems.

We note that while the manufacturing of qubit rotation devices is technically mature (as demonstrated by photonic Qline implementations \cite{Qline_implementation_berlin,Schmid_2005}), to the best of our knowledge, no implementation has yet been reported with the additional guarantee that the device strictly processes single-qubit states. At present, the technological overhead and cost associated with ensuring this property remain to be fully characterized. We therefore position this work as a theoretical motivation for exploring the feasibility of experimental realization of such "certified" single-qubit rotation devices, which would serve as a versatile foundation for secure quantum communication.

\section{Preliminaries} \label{sec:prelim}

\subsection{Notation}

We assume basic knowledge about the theory of quantum communication and computing.

We write $\ket{+_\theta}$ the state $\frac1{\sqrt2}\big( \ket0 + e^{i\theta}\ket1 \big)$ for any $\theta\in\mathds{R}$.

We write $R_Z(\theta)$ the single-qubit rotation $\begin{bmatrix} 1&0\\0&e^{i\theta} \end{bmatrix}$
of angle $\theta\in\mathds{R}$ around the $Z$ axis of the Bloch sphere.

For any $\theta\in\mathds R$, we call basis $\theta$ the measurement basis of eigenstates $\{\ket{+_{\theta}}, \ket{+_{\pi+\theta}}\}$. A measurement in this basis yields the classical outcome $0$ if $\ket{+_{\theta}}$ is measured, and 1 for $\ket{+_{\theta+\pi}}$.

\subsection{The Abstract Cryptography framework} \label{sec:AC_presentation}

We prove the security of our protocol using the \emph{Abstract Cryptography} framework~\cite{AbstractCrypto}.

In this framework, cryptographic protocols are defined as systems: abstract objects with \emph{interfaces} that define all possible inputs and outputs of the system. Each interface represents an entity's access to the system. A cryptographic construction typically includes a \emph{user} interface, where the interaction between the honest user and the system occurs, as well as an adversarial interface, called the \emph{outer} interface, which encapsulates the attacker's capabilities.
        
Systems can be composed, either in parallel or sequentially.
The parallel composition of two systems $\R$ and $\S$, denoted $R||S$, is a system with the interfaces of both sub-systems. 
It simply describes the fact that these systems are put side by side and seen as a whole, unique system.

The sequential composition describes the fact that the output of a system can be used as input by other systems. For instance, two systems $\R$ and $\S$ can be sequentially composed \emph{at the interfaces $i_\R$ of $\R$ and $j_\S$ of $\S$} if each input (respectively output) of these interfaces can be associated with a unique output (respectively input) of the other interface.
When it is clear at which interfaces a sequential composition occurs, we denote it  $R\circ S$ or simply $RS$ without specifying the interfaces $i_R$ and $j_S$.
The resulting system has all the interfaces of both sub-systems except from $i_R$ and $j_S$.

In this framework, the security of a cryptographic scheme is defined in a simulation-based fashion, as the ``closeness'' of that system to an ideal version of it. 
Formally, a protocol $P$ of ideal version $\tilde P$ is said to be \emph{$\eps$-secure} if there exists a system $\SIM$ (called a simulator) such that $P \approx_\eps \tilde P\circ \SIM$, with $\approx_\eps$ denoting indistinguishability, for any computationally unbounded entity and up to an advantage at most $\eps$.

Such a definition of security is said to be {\em composable}, meaning that any $\eps$-secure system can replace its ideal version in any setting with no discernible effect for adversaries with abilities comparable to the distinguisher, except with probability $\eps$ (see~\cite{AbstractCrypto}, theorem $2$).

\section{Implementation of a qubit detector} \label{sec:Detector}

In this section, we show that in any protocol involving single-qubit measurements in a given plane, these trusted measurements can be securely replaced by a device that performs single-qubit rotations in that same plane followed by a measurement by an untrusted detector.

We first define the context and a protocol involving single-qubit rotations, and then prove \Cref{thm:detector} which states that this protocol securely implements the task of a detector.

For convenience, we restrict our analysis to the case of measurement in the $XY$ plane of the Bloch sphere, but we point out that, due to equivalence up to a basis change, our result applies to any measurement basis.

\medskip

Let $P$ be a given cryptographic protocol involving at least one party, that we will call Bob (or $B$). Let $P^B$ denote the local protocol that $B$ follows with the following structure:

\begin{assumption} \label{ass:pb_structure}
    $P^B$ can be decomposed in two distinct phases:
    \begin{enumerate}
        \item \textbf{State distribution:}  $B$ receives $N$ qubits. They then sample $(\theta_n)_{n\in[N]} \in [0, 2\pi[^N$ according to a given probability density function $p_\Theta$ and measure each qubit $n\in[N]$
        in basis $\theta_n$. We write $(x_n^B)_{n\in N}$ their classical outcomes.
        \item \textbf{Classical post-processing:} $B$ performs an arbitrary protocol with inputs $(\theta_n^B)_{n\in N}$ and $(x_n^B)_{n\in N}$.
    \end{enumerate}
\end{assumption}

Let $\P^B$ be a system that formalizes $P^B$. \Cref{ass:pb_structure} implies that $\P^B$ can be decomposed as $\P^B = \SD^B \circ \PP$ with $\SD^B$ and $\PP$ systems that respectively implement the state distribution and post-processing steps. We formally define $\SD^B$ below, from the structure imposed by \Cref{ass:pb_structure}.

\begin{definition}[$\SD^B$, $(\SD^B_n)_{n\in[N]}$]
    $\SD^B$ is represented \Cref{fig:SDB}. It is a system formalizing $P^B$'s state distribution. Upon receiving at its outer interface a quantum state $\rho$ of dimension $2^N$, $\SD^B$ samples $(\theta_n)_{n\in[N]} \in [0, 2\pi[^N$ according to $p_\Theta$ and measures each qubit $n\in[N]$ of $\rho$ in the basis $\theta_n$. $\SD^B$ then outputs $(\theta_n)_{n\in[N]}$ along with the measurement outcomes $(x_n)_{n\in[N]}$ at its user interface.
    Note that the measurement step can be decomposed as a succession of single-qubit measurements. Formally, 
    \begin{equation} \label{eq:SDBn_def}
        \SD^B = \G_\Theta \circ {\big|\big|}_{n\in[N]}\SD^B_n,
    \end{equation}
    where $\G_\Theta$ is a system that samples $(\theta_n)_{n\in[N]}$ and where $(\SD^B_n)_{n\in[N]}$ are single-qubit measurement systems.
\end{definition}

Let Charlie be a party equipped with a single-qubit rotation device.
As discussed, our goal is to achieve a protocol enabling Charlie, with the help of an untrusted detector, to implement Bob's state distribution $\SD^B$. For that, we define below the system $\SD^C$ representing the state distribution of Charlie. $\SD^C$ formalizes the following intuition: to measure an unknown state in a given basis $\theta$, Charlie may simply rotate the input state by an angle $-\theta$ and ask the untrusted detector to measure the resulting qubit in the Hadamard basis and to announce the classical outcome. On top of this, to hide the outcome of the measurement, Charlie randomly flips the qubit by changing the rotation angle to $-\theta +r\pi$ with a uniformly random $r\in\{0, 1\}$, effectively encrypting the classical outcome obtained by the detector in a way that ensures that only Charlie can read the outcome\footnote{Note that while an obvious attack on the correctness would be for the detector to lie about the classical outcome of the measurement, an identical attack can be performed on the trusted detector of Bob by simply flipping the qubit before providing the qubit to Bob. Such an attack lies within the threat model considered by quantum cryptographic protocols as the quantum channel is most conventionally assumed to be controlled by the adversary.}.

\begin{figure}[htbp]
     \centering
     \begin{subfigure}[b]{0.45\textwidth}
         \centering
         \includegraphics[width=\textwidth]{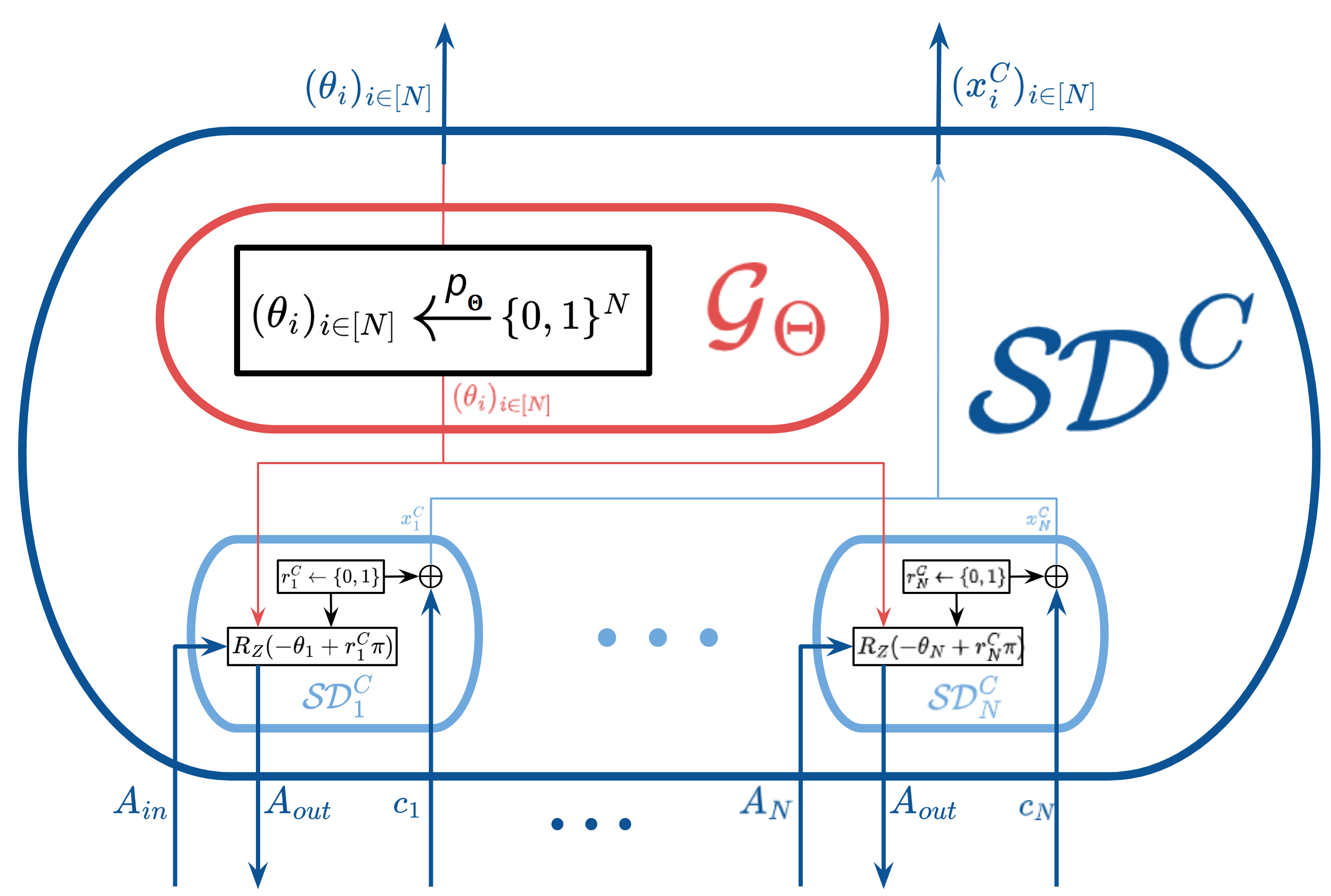}
         \caption{\centering The system $\SD^C$ of Charlie.}
         \label{fig:SDC}
     \end{subfigure}
     \hfill 
     \begin{subfigure}[b]{0.44\textwidth}
         \centering
         \includegraphics[width=\textwidth]{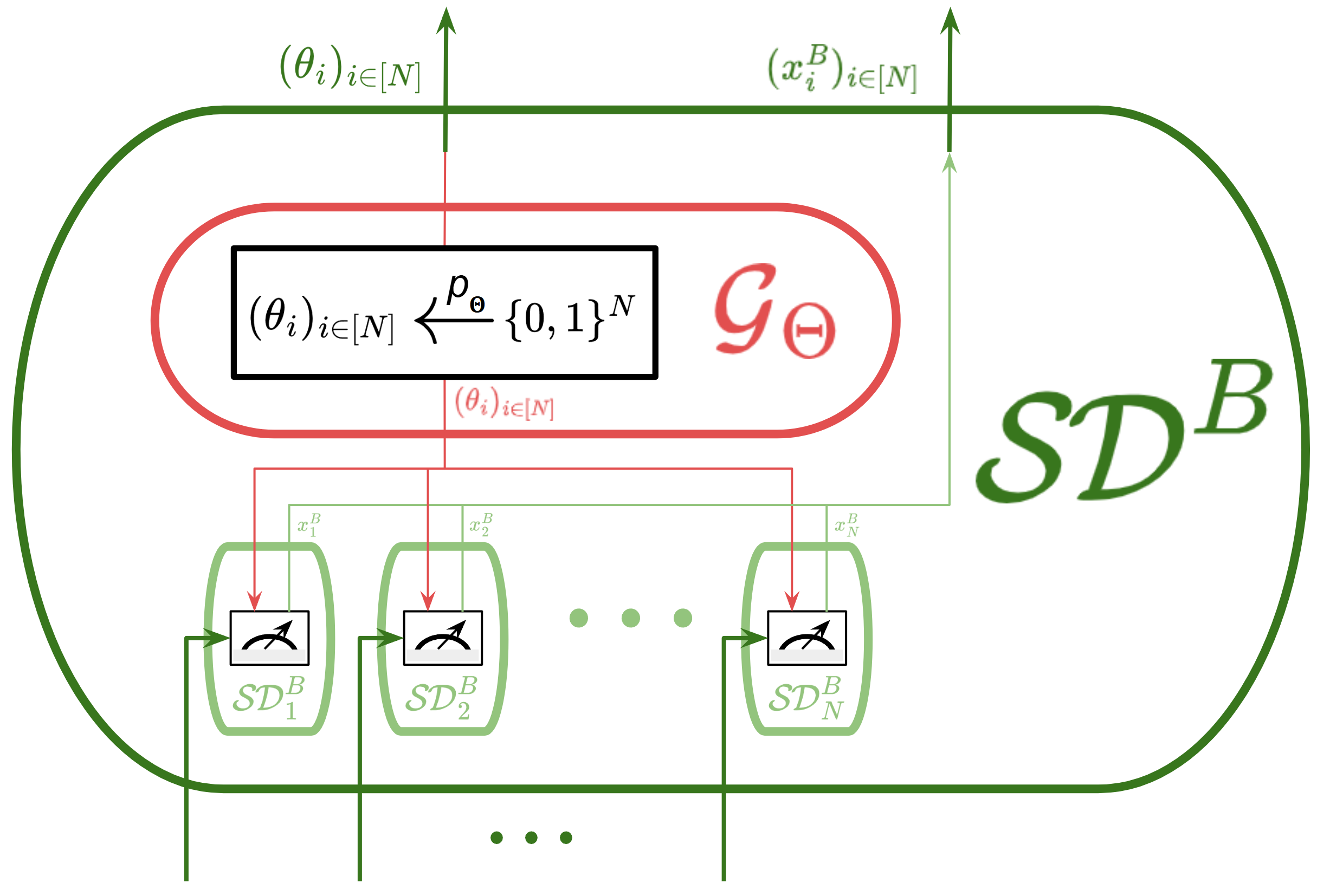}
         \caption{\centering The system $\SD^B$ of the detector Bob.}
         \label{fig:SDB}
     \end{subfigure}
     \caption{The state distribution systems of Charlie (left) and Bob (right)}\label{fig:systems-detector}
\end{figure}

\begin{definition}[$\SD^C$, $(\SD^C_n)_{n\in[N]}$]
    $\SD^C$ is represented \Cref{fig:SDC}.
    $\SD^C$ samples $(\theta_n)_{n\in[N]} \in [0, 2\pi[^N$ according to $p_\Theta$ as well as a uniformly random bit string $r^C = (r^C_n)_{n\in[N]} \in \{0, 1\}^N$. Then, successively for $N$ rounds $n\in[N]$, $\SD^C$ receives a single qubit, applies the $R_Z(-\theta_n+r_n^C\pi)$ operation to that qubit and forwards the resulting state to its outer interface. Finally, upon receiving at it’s outer interface $N$ bits $(c_n)_{n\in[N]}$, $\SD^C$ outputs at its user interface 
    \begin{align}
        \label{eq:tic_def}
        &(\theta_n)_{n\in[N]}, \text{ and } \\
        \label{eq:xic_def}
        &(x^C_n)_{n\in[N]} = (r^C_n \oplus c_n)_{n\in[N]}.
    \end{align}
\end{definition}

Note that the successive rounds can be seen as a parallel composition of single-qubit systems $(\SD^C_n)_{n\in[N]}$ that receive an angle $\theta_n$ as well as a qubit and return the qubit after the appropriate gate has been applied. As a consequence, 
\begin{equation} \label{eq:SDCn_def}
    \SD^C = \G_\Theta \circ {\big|\big|}_{n\in[N]}\SD^C_n.
\end{equation}

Finally, inspired by the construction of the entanglement-based protocol of \cite{SecretSharingQline}, we define the simulator $\SIM$, an abstract system used as a proof tool to show that $\SD^C$ securely implements $\SD^B$. More precisely, $\SIM$ is designed so that $\SD^C \approx_0 \SD^B\circ\SIM$, which we will prove later (see \Cref{thm:detector}). As a consequence, it has an inner interface meant to connect to the outer interface of $\SD^B$ (with an $N$-qubits output), and an outer interface mirroring the one of $\SD^C$ (with two $N$-qubits signals: one input and one output).

\begin{definition}[$\SIM$, $(\SIM_n)_{n\in[N]}$]

    $\SIM$ is represented \Cref{fig:SB}. 
    Successively for $N$ rounds $n\in[N]$, $\SIM$ performs the following tasks:
    \begin{enumerate}
        \item $\SIM$ receives a single qubit in a register $A_n$, prepares a register $B_n$ in the state $\ket 0_{B_n}$ and applies the $CNOT_{A_n, B_n}$ gate before returning register $A_n$ to its outer interface.
        \item Upon receiving a bit $c_n$ (from its outer interface), $\SIM$ applies the $Z^{c_n}$ operation to register $B_n$ before outputting that register at its inner interface, for it to be received by system $\SD^B$.
    \end{enumerate}
    Note that according to this definition, $\SIM$ can be viewed as the parallel composition of $N$ subsystems that each perform the task of one round. We call these subsystems $(\SIM_n)_{n\in[N]}$:
    \begin{equation} \label{eq:SIMn_def}
        \SIM = {\big|\big|}_{n\in[N]}\SIM_n.
    \end{equation}
\end{definition}

\begin{figure}[!ht]
    \centering 
    \includegraphics[scale=0.3]{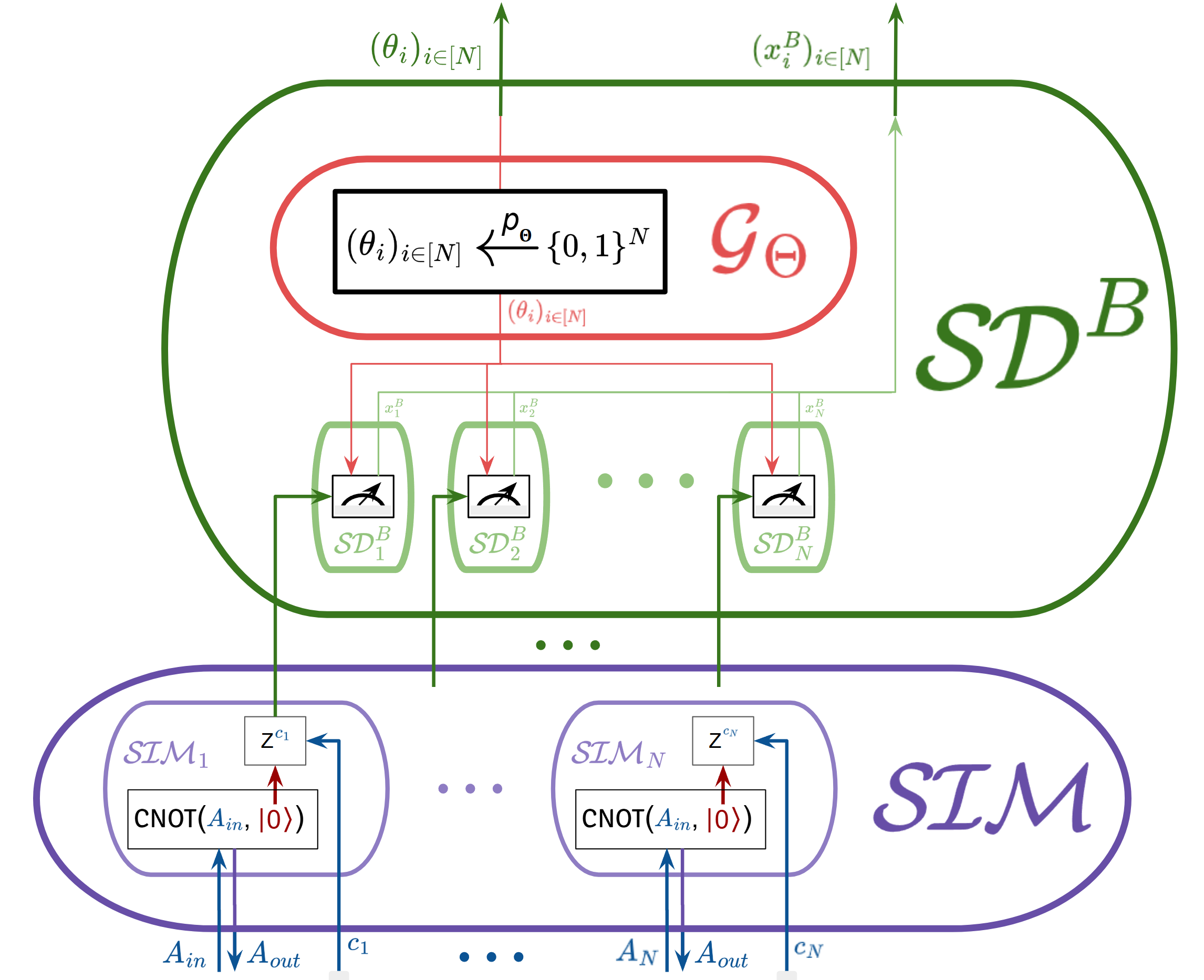}
    \caption{\centering The system $\SIM$ plugged on $\SD^B$.}
    \label{fig:SB}
\end{figure}

In order to prove the desired result $\SD^C \approx_0 \SD^B\circ\SIM$ (i.e., \Cref{thm:detector} below), which states the security of Charlie's protocol $\SD^C$, we first show the following \Cref{thm:detector_single_qubit_systems} which focuses on the single-qubit subsystems composing $\SD^C$ and $\SD^B\circ\SIM$, namely $\SD^C_n$, $\SD^B_n$, and $\SIM_n$.

\begin{lemma} \label{thm:detector_single_qubit_systems}
    Under \Cref{ass:pb_structure}, for all $n\in[N]$
    \begin{equation}
        \SD^C_n \approx_0 \SD^B_n \circ \SIM_n.
    \end{equation}
\end{lemma}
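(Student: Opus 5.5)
Both sides are single-qubit systems with the same interfaces: an inner interface that receives $\theta_n$ from $\G_\Theta$ (for $\SD^C_n$, also using the local bit $r^C_n$), an outer interface that takes a qubit in and gives a qubit out, an outer input bit $c_n$, and a user output $(\theta_n, x_n)$. Since $\approx_0$ means perfect indistinguishability, the plan is to show that for every fixed $\theta_n$ the two systems implement the \emph{same} quantum instrument. Concretely, fix an arbitrary distinguisher strategy. It prepares a joint state $\rho_{A_nE}$, sends $A_n$, gets back a qubit, acts on it together with $E$ (any operation, possibly a measurement), and sends a bit $c_n$ that may depend on its outcome. We compare the joint state of $E$ (with the returned qubit) and the user outputs. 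By linearity and purification it suffices to consider a pure input $\ket{\psi}=\alpha\ket{0}_{A}\ket{e_0}_E+\beta\ket{1}_A\ket{e_1}_E$.

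\textbf{Real side.} $\SD^C_n$ returns $R_Z(-\theta_n+r\pi)\ket\psi$ with $r$ uniform, and outputs $x=r\oplus c_n$. Averaging over $r$, the state at the distinguisher together with the user register is
\begin{equation*}
\tfrac12\sum_{r}\ketbra{x=r\oplus c_n}\otimes Z^{r}R_Z(-\theta_n)\,\psi\,R_Z(\theta_n)Z^{r},
\end{equation*}
where $c_n$ is chosen as a function of the distinguisher's later operation on this state. Writing $Z^r$ in the Hadamard basis, $Z^r$ maps $\ket{\pm}$ to $\ket{\pm}$ with a relative sign flip. So this is the correlation in which $x$ is a one-time-pad of $c_n$ by the ``phase'' of the returned qubit.

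\textbf{Ideal side.} $\SIM_n$ applies $CNOT_{A,B}$, giving $\alpha\ket{00}\ket{e_0}+\beta\ket{11}\ket{e_1}$, and returns $A$. After receiving $c_n$, it applies $Z^{c_n}$ to $B$, and $\SD^B_n$ measures $B$ in basis $\theta_n$. The key computation is to expand register $B$ in the basis $\{\ket{+_{\theta_n+x\pi}}\}_{x}$. Using $\braket{+_{\theta+x\pi}}{b}=\tfrac1{\sqrt2}e^{-ib(\theta+x\pi)}$ and the fact that $Z^{c}$ shifts $x$ to $x\oplus c$, projecting $B$ onto outcome $x$ leaves $A$ (with $E$) in the state $\tfrac1{\sqrt2}Z^{x\oplus c_n}R_Z(-\theta_n)\ket\psi$. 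Each outcome has weight $1/2$ before the distinguisher acts. Since the $Z^{c_n}$ on $B$ commutes with everything done on $A E$, the measurement of $B$ can be moved to right after the CNOT; this relies on $c_n$ only controlling a gate on $B$. The ideal side therefore produces the same ensemble as the real side: a uniform bit $s=x\oplus c_n$, the returned state $Z^{s}R_Z(-\theta_n)\ket\psi$, and the output $x=s\oplus c_n$. This matches the real side with $s=r$.

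\textbf{Main obstacle.} The main obstacle is the adaptivity of $c_n$. The distinguisher chooses $c_n$ after seeing the returned qubit, so I cannot fix $c_n$ first. I would handle this by modelling the distinguisher as an arbitrary instrument on the returned qubit and $E$ that outputs classical $c_n$. Both sides can be written as ``uniform $s$, return $Z^sR_Z(-\theta_n)\ket\psi$, then output $x=s\oplus c_n$'', so the final joint states coincide for every distinguisher and the distinguishing advantage is $0$. Finally, $\theta_n$ is just a classical input, identical on both sides, so this holds for all $\theta_n$.
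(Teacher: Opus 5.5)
Your proposal is correct and takes essentially the same route as the paper. You purify the distinguisher's input, expand the post-$CNOT$ register $B$ in the $\theta_n$ basis to get $\frac{1}{\sqrt2}\sum_{s}Z^{s}R_Z(-\theta_n)\ket{\psi}\otimes Z^{s}\ket{+_{\theta_n}}_B$, and handle the adaptive $c_n$ by replacing ``apply $Z^{c_n}$, then measure in basis $\theta_n$'' with ``measure immediately, then output $s\oplus c_n$''. This gives exactly the ensemble produced by $\SD^C_n$ with $s=r^C_n$, which is the paper's identification $\psi^B=\psi^C$.
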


\begin{proof}
    This proof is inspired from the proof of Theorem 1 of \cite{SecretSharingQline}.
    
    Let $n\in[N]$ be fixed. 

    Let $\S$ be a system which is either $\SD^C_n$ or $\SD^B_n\circ\SIM_n$ with equal probability $\frac 12$, and consider a distinguisher $\D$, i.e., an unbounded system able to perform any operation allowed by the laws of quantum mechanics, and whose goal is to distinguish which of the two systems $\S$ is ($\D$ and $\S$ are depicted in  \Cref{fig:Sn}).
    
    Our goal now is to prove that $\D$ succeeds with a maximum probability of $\frac 12$ --- the probability of a random guess. 

    \begin{figure}[!ht]
        \centering 
        \includegraphics[width=0.95\textwidth]{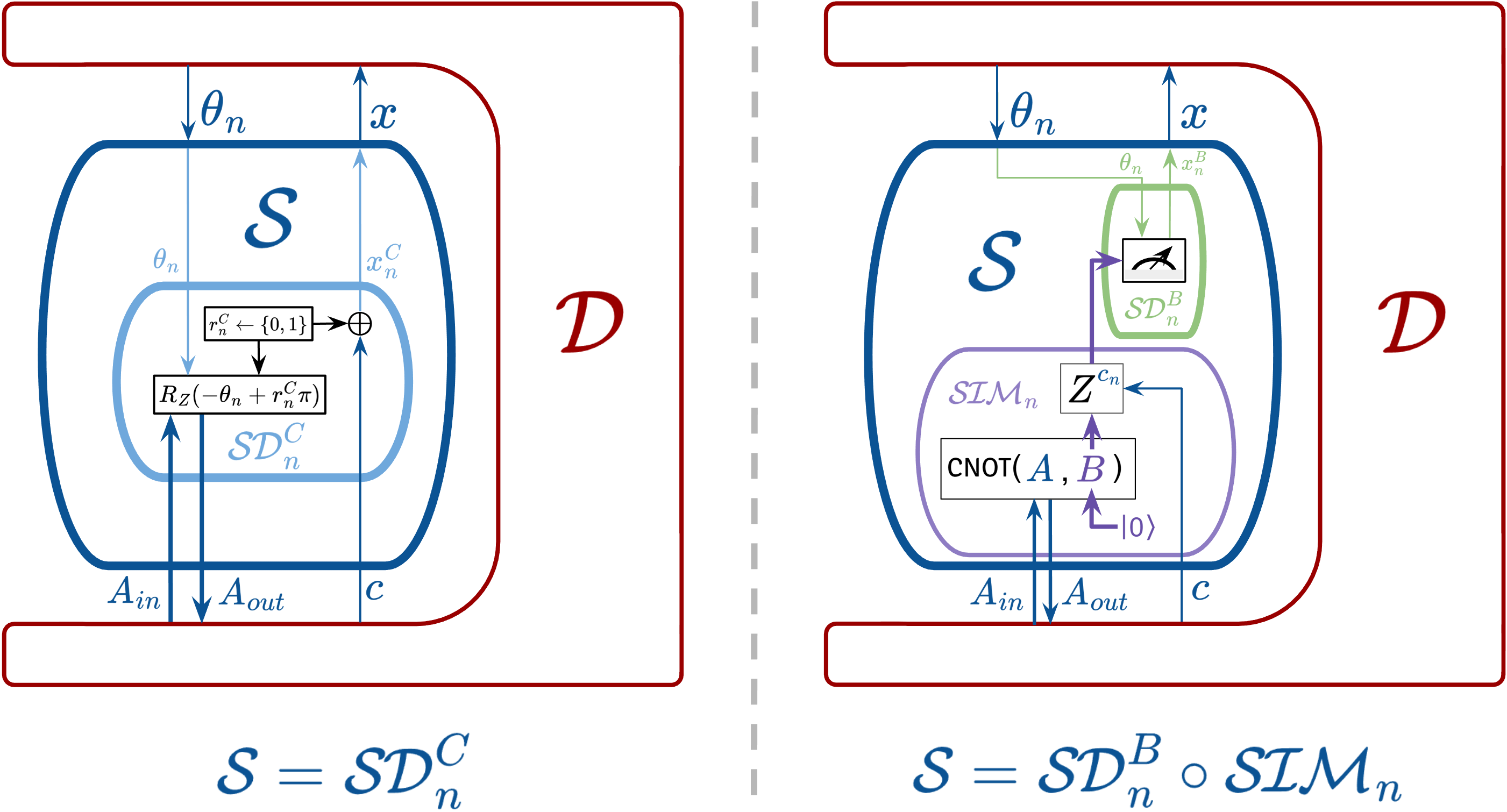}
        \caption{\centering The distinguisher $\D$ and the system $\S$, either equal to  $\SD^C_n$ or $\SD^B_n \circ \SIM_n$.}
        \label{fig:Sn}
    \end{figure}

    Without loss of generality, we can consider that the behavior of $\D$ amounts to: 
    \begin{enumerate}
        \item Choosing an angle $\theta_n$ and preparing a pure state $\ket\tau$ in a single-qubit register $A$, and a private register $D_1$.
        \item Sending $\theta_n$ and register $A$ to $\S$ (through the input $\rho_{in}$ of $\S$) for it to apply its operation, and getting the register back (from the output $\rho_{out}$ of $\S$).
        \item \label{it:ci} Applying a measurement $\M^{(c)}$ on registers $A$ and $D_1$ with outcome $c$ and sending this classical bit to $\S$.
        \item Obtaining the final output $x_n$ of $\S$ and storing it in a register $D_2$, as well as the classical outcome $c$ of the previous measurement.
        \item Applying a final measurement $\M_{\D}$ on registers $A$, $D_1$, $D_2$, and outputting the single-bit measurement outcome $d$.
    \end{enumerate}
    
    \noindent We call $\psi^B$ (respectively $\psi^C$) the state of registers $A, D_1, D_2$ before step 5, on which the final measurement $\M_\D$ is performed if $\S = \SD^B_n\circ\SIM_n$  (respectively if $\S = \SD^C_n$). We will show that the trace distance of these two states $Tr(\psi^B, \psi^C)$ is $0$, thus proving that $\M_\D$ cannot distinguish the two states better than a random guess.

    We first suppose that $\S = \SD^B_n\circ\SIM_n$.
    We write $\ket\tau_{AD_1} = \alpha \ket{\tau_{0}}_{D_1} \ket{0}_{A} + \beta \ket{\tau_{1}}_{D_1} \ket{1}_{A}$, and
the state of registers $A, B, D_1$ after the $CNOT$ gate is

    \begin{align}
        &
        &&\mkern-191mu(\mathds{I}_{D_1} \otimes CNOT_{AB})
        (\ket\tau_{AD_1} \otimes \ket{0}_B) \label{eq:I_D_otimes_CNOT_AB}\\
        &= 
        &&\mkern-101mu\alpha \ket{\tau_0}_{D_1} 
        \ket{0}_A \ket{0}_B 
        + 
        \beta \ket{\tau_1}_{D_1} 
        \ket{1}_A \ket{1}_B \nonumber\\
        &=  
        \frac1{\sqrt2} \Big[
            &&\mkern-56mu\Big(
                \alpha \ket{\tau_{0}}_{D_1} \ket{0}_{A} 
                + 
                e^{-\theta_n i}\beta \ket{\tau_{1}}_{D_1}\ket{1}_{A} 
            \Big) 
            \ket{+_{\theta_n}}_{B} \nonumber\\
            &&&\mkern-78mu+\Big(
                \alpha \ket{\tau_{0}}_{D_1} \ket{0}_{A} 
                - 
                e^{-\theta_n i}\beta \ket{\tau_{1}}_{D_1}\ket{1}_{A}
            \Big) 
            \ket{-_{\theta_n}}_{B}
        \Big] \nonumber
        \\
        &=  
        \frac1{\sqrt2} 
        \Big[
            &&\mkern-56muR_Z(-\theta_n)_{A} \ket{\tau}_{AD_1} \ket{+_{\theta_n}}_{B}
            +
            R_Z(-\theta_n + \pi)_{A} \ket{\tau}_{AD_1} \ket{-_{\theta_n}}_{B}
        \Big] \nonumber\\
        \label{eq:post_CNOT_state}
        &=  
        \frac1{\sqrt2} 
        &&\mkern-61mu\sum_{r^B\in\{0,1\}}
            R_Z(-\theta_n+r^B\pi)_{A} \ket{\tau}_{AD_1} Z^{r^B}_B\ket{+_{\theta_n}}_{B} 
    \end{align}

    Note that register $B$ is then sent through a $Z^{c}$ gate before being measured in the $\theta_n$ basis giving classical outcome $x^B$.  From the point of view of the distinguisher, this is equivalent to performing the measurement first, writing the outcome $r^B$ and to define $x^B$ as $r^B \oplus c$.
    In this equivalent situation, one can assume without loss of generality that the measurement of $B$ giving $r^B$ is performed before the one of $A$ performed by the distinguisher. \Cref{eq:post_CNOT_state} shows that this first measurement yields a uniformly random outcome ($r^B$) and collapses the state of registers $A, D_1$ onto
    \begin{equation*}
        R_Z(-{\theta_n} + r^B\pi)_{A} \ket{\tau}_{A D_1}
    \end{equation*}

    In this case, the mixed state between steps 4 and 5 of the distinguisher's behavior (after the measurement performed by $\SD^B$ and after $\M^{(c)}$) is
    \begin{equation*}
        \psi^B
        =
        \sum_{\substack{r^B\in\{0,1\} \\ c\in\{0,1\}}}
        \frac{1}{p_c} \mathcal{E}_c \Big(
        R_Z(-{\theta_n} + r^B\pi)_{A} 
               \kb{\tau}_{A, D_1}   R_Z(-{\theta_n} + r^B\pi)_{A}^\dagger 
        \Big)
        \otimes
        \ketbra{(c\oplus r^B, c)}_{D_2},
    \end{equation*}
    where $\mathcal{E}_{c}$ the completely positive trace non-increasing map corresponding to outcome $c$ for $\M^{(c)}$ and $p_c$ a normalization factor.

    Otherwise, if $\S = \SD^C$, The measurement $\M^{(c)}$ is by definition of $\SD^C$ applied to
    $
        R_Z(-{\theta_n} + r^C\pi)_{A}
        \ket{\tau}_{A, D_1}
    $
    such that 
    \begin{equation*}
        \psi^C
        =
        \sum_{\substack{r^C\in\{0,1\} \\ c\in\{0,1\}}}
        \frac{1}{p_c} \mathcal{E}_c \Big(
        R_Z(-\theta_n + r^C\pi)_{A} 
        \kb{\tau}_{A, D_1}   R_Z(-\theta_n + r^C\pi)_{A}^\dagger 
        \Big)
        \otimes
        \ketbra{(c\oplus r^C, c)}_{D_2}.    
    \end{equation*}

    Since $\psi^B = \psi^C$, 
    the outcome $d$ of the final measurement $\M_\D$ does not depend on whether $\S$ is $\SD^C_n$ or $\SD^B_n\circ\SIM_n$. This concludes the proof.
\end{proof}

\Cref{thm:detector_single_qubit_systems} states that the state distribution $\SD^C_n$ of Charlie securely and composably implements the task of measuring a single qubit. The consequence is that, independently of the context, any such measurement can be implemented using a qubit rotation device realizing $\SD^C_n$ and an untrusted detector.

We are now ready to prove \Cref{thm:detector} which states that the state distribution $\SD^C$ of Charlie securely implements the one of the detector $B$.

\begin{theorem} \label{thm:detector}
    Under \Cref{ass:pb_structure},
    \begin{equation} \label{eq:thm:detector}
        \SD^C \approx_0 \SD^B \circ \SIM
    \end{equation}
\end{theorem}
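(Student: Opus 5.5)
The plan is to lift \Cref{thm:detector_single_qubit_systems} from single rounds to the full system using the decompositions \eqref{eq:SDBn_def}, \eqref{eq:SDCn_def} and \eqref{eq:SIMn_def}.

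\textbf{Rewriting the right-hand side.} Since $\SIM$ only acts on the outer interface of $\SD^B$, while $\G_\Theta$ only feeds angles into the inner (user-side) inputs of the subsystems, composition is associative and distributes over the parallel composition. This gives
\begin{equation*}
    \SD^B\circ\SIM = \G_\Theta \circ \Big({\big|\big|}_{n\in[N]}\SD^B_n\Big)\circ {\big|\big|}_{n\in[N]}\SIM_n = \G_\Theta \circ {\big|\big|}_{n\in[N]}\big(\SD^B_n\circ\SIM_n\big).
\end{equation*}
Here I would check that the interfaces match round by round: $\SIM_n$ connects only to $\SD^B_n$, receives the $n$-th qubit and bit $c_n$ from outside, and each $\SD^B_n$ receives only $\theta_n$ from $\G_\Theta$.

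\textbf{Hybrid argument.} I would replace the blocks one at a time. For $k\in\{0,\dots,N\}$, define the hybrid
\begin{equation*}
    \mathcal H_k = \G_\Theta\circ \Big({\big|\big|}_{n\le k}\SD^C_n \,\big|\big|\, {\big|\big|}_{n>k}\SD^B_n\circ\SIM_n\Big),
\end{equation*}
so that $\mathcal H_0 = \SD^B\circ\SIM$ and $\mathcal H_N=\SD^C$. Any distinguisher between $\mathcal H_{k-1}$ and $\mathcal H_k$ can absorb $\G_\Theta$ and all other blocks into itself. This yields a distinguisher between $\SD^C_k$ and $\SD^B_k\circ\SIM_k$ with the same advantage. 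Note that this reduction also covers $\theta_k$ being correlated with the other angles, and adaptive interleaving across rounds, since the enlarged distinguisher simulates everything else. By \Cref{thm:detector_single_qubit_systems} that advantage is $0$. Equivalently, one can invoke the composition theorem of Abstract Cryptography, under which distinguishing advantages add under parallel and sequential composition, giving a total of $N\cdot 0=0$.

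\textbf{Main obstacle.} The delicate point is that in \Cref{thm:detector_single_qubit_systems} the distinguisher chose $\theta_n$ itself. In the hybrid, $\theta_k$ is instead sampled by $\G_\Theta$ jointly with the other angles, and it is output at the user interface rather than input. I would handle this by noting that the lemma's proof shows $\psi^B=\psi^C$ for every fixed $\theta_n$ and every strategy. The equality therefore still holds when $\theta_k$ comes from a joint distribution known to the reduced distinguisher, which conditions on it.

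A further concern is that the outcome $x_k$ may be revealed before later rounds are played. However, the lemma's distinguisher already receives $x_n$ and keeps arbitrary quantum side information, so any later adaptive behaviour can be folded into its final measurement $\M_\D$. This gives $\SD^C\approx_0\SD^B\circ\SIM$.
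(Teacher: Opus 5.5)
Your proposal is correct and follows the paper's route. Like the paper, you use \eqref{eq:SDBn_def}, \eqref{eq:SDCn_def}, \eqref{eq:SIMn_def} and the interchange law to write both sides as $\G_\Theta$ composed with a parallel composition of single-round blocks, then lift \Cref{thm:detector_single_qubit_systems} by composition; your explicit hybrid argument and your remarks on correlated angles and adaptivity just unpack the parallel/sequential composition property of Abstract Cryptography, which the paper invokes directly.
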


\begin{proof}
    Composing in parallel \Cref{thm:detector_single_qubit_systems} for all $n\in[N]$ and then sequentially with $\G_\Theta$ gives
    \begin{align*}
        \G_\Theta \circ {\big|\big|}_{n\in[N]}\SD^C_n 
        &\approx_0
        \G_\Theta \circ {\big|\big|}_{n\in[N]} 
        \Big( 
            \SD^B_n \circ \SIM_n
        \Big)
        \\
        \G_\Theta \circ 
        {\big|\big|}_{n\in[N]}\SD^C_n
        &\approx_0
        \G_\Theta \circ
        \Big({\big|\big|}_{n\in[N]} \SD^B_n \Big)
        \circ
        \Big({\big|\big|}_{n\in[N]} \SIM_n  \Big)\\
        \SD^C&\approx_0
        \SD^B\circ\SIM
    \end{align*}
    where the second equation is a natural consequence of the definition of composition and is sometimes referred to as the interchange law for composition. The last equation follows by definition of $\SD^C$(\Cref{eq:SDCn_def}), $\SD^B$ (\Cref{eq:SDBn_def}) and $\SIM$ (\Cref{eq:SIMn_def}).
\end{proof}

\section{Implementation of a qubit source} \label{sec:Source}

We now focus on a cryptographic protocol $P$ involving a qubit source. We call Alice, or $A$, a given entity, $P^A$ the local protocol followed by $A$, and we suppose that $P^A$ involves the preparation of single-qubit quantum states. 

Similarly to the case of qubit measurements, we consider an entity called Charlie, or $C$, who wishes to assume the role of $A$ in $P$ even though his only quantum ability is to receive $N$ qubits, apply to each qubit $n\in[N]$ a rotation $R_Z(\varphi_n)$ with chosen $\varphi_n\in[0, 2\pi[$ and to output the rotated qubits.
Charlie will rely on an external, untrusted source to prepare the required number of qubits and will run these qubits in his device so that he outputs quantum states that can be used in the protocol $P$.

From the nature of Charlie's operation, we can only hope to achieve this task if the desired states are parametrized by only one secret parameter $\theta\in[0, 2\pi[$. We thus choose to restrict our analysis to the case where $P^A$ only involves the preparation of $\ket{+_\theta}$ states \footnote{The vast majority of quantum cryptographic protocols admit an equivalent view where the source(s) only have to prepare $\ket{+_\theta}$ states. Examples are given \Cref{sec:examples}}.
Formally, we now consider the following \Cref{ass:pa_structure} instead of \Cref{ass:pb_structure}.
\begin{assumption} \label{ass:pa_structure}
    $P^A$ can be decomposed in two distinct phases:
    \begin{enumerate}
        \item \textbf{State distribution:} $A$ samples $\theta = (\theta_n)_{n\in[N]} \in [0, 2\pi[^N$ according to a given probability density function $p_\Theta$. They then prepare and output a quantum state of the form $\bigotimes\limits_{n\in[N]}\ket{+_{\theta_n}}$.
        \item \textbf{Classical post-processing:} $A$ performs an arbitrary classical protocol with inputs $(\theta_n)_{n\in N}$.
    \end{enumerate}
\end{assumption}
Additionally, and specifically for this section, we assume the following property on the probability distribution of the angles $\theta_n$\footnote{While this assumption might appear restrictive at first glance, it is satisfied by most common quantum cryptographic protocols. It is functionally equivalent to stating that for any angle $\theta_n$, the qubits are sent in the $\ket{+_\theta}$ and $\ket{-_\theta}$ states with equal probability.}.
\begin{assumption}\label{ass:p_Theta_periodicity}
    The probability density function $p_\Theta$ which determines the sampling of $(\theta_n)_{n\in[N]}$ is $\pi$-periodic
\end{assumption}

We define $\P^A$ as the system that formalizes the protocol $P^A$. As for $\P^B$ in \Cref{sec:Detector}, $\P^A$ can be decomposed as $\P^A = \SD^A \circ \PP$ with $\SD^A$ a system implementing the state distribution and $\PP$ performing the post-processing. Following \Cref{ass:pa_structure}, $\SD^A$ is a simple "qubit source" system that outputs at a user interface the angles $(\theta_n)_{n\in[N]}$ (to be received by $\PP$), and at an outer interface the quantum state $\bigotimes\limits_{n\in[N]}\ket{+_{\theta_n}}$.\\

We now describe how Charlie, who does not have access to a trusted single-qubit source,  will assume the role of $A$ in protocol $P$. A simple way for Charlie to output the desired $\ket{+_\theta}$ states for a given $\theta\in[0, 2\pi[$ is to ask for qubits in the $\ket{+}$ state and apply the corresponding rotation $R_Z(\theta)$ to them. However, if the untrusted source providing the qubits to Charlie deviates from the expected behavior, Charlie might output a different state without realizing it, and thus would not effectively implement the ideal source $\SD^A$ of the protocol $P^A$. We will show that under additional assumptions on the protocol itself, the overall protocol $\P^C_{source} = \SD^C_{source} \circ \PP$ securely implements the desired $\P^A$, even if Charlie does not implement $\SD^A$.

We formalize Charlie's behavior in the system $\SD^C_{source}$.

\begin{definition}[$\SD^C_{source}$]
    $\SD^C_{source}$ is a system formalizing Charlie's state distribution. It is represented in \Cref{fig:SDCNOTandSDCsource}.
    As all previous systems, $\SD^C_{source}$ starts by sampling $\theta = (\theta_n)_{n\in[N]} \in [0, 2\pi[^N$ according to $p_\Theta$. Then, successively for $N$ rounds $n\in[N]$, it receives a qubit \emph{from an untrusted source}, applies the $R_Z(\theta_n)$ gate to it and outputs the qubit back at its outer interface. $\SD^C_{source}$ also outputs $(\theta_n)_{n\in[N]}$ at its user interface (to be received by the post-processing $\PP$).
\end{definition}

We finally define the local protocol of Charlie as the composition of his state distribution $\SD^C_{source}$ and the post-processing $\PP$ of the protocol $\P$.
$$\P^C_{source} = \SD^C_{source} \circ \PP.$$ 
Our goal is now to show that (under some assumptions) $\P^C_{source}$ securely implements $\P^A$.
More precisely, we will exhibit two variants of $P^A$, called $P^{CNOT}$ and $P^{EB}$ and show the following results:

\begin{theorem} \label{thm:untrusted_+}
    Under \Cref{ass:pa_structure,ass:p_Theta_periodicity},
    if $\P^{CNOT}$ $\eps$-securely implements $\P^{A}$ for a given $\eps\in[0, \frac{1}{2}]$,
    then $\P^C_{source}$ also $\eps$-securely implements $\P^A$.
\end{theorem}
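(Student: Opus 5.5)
The plan is to reduce the statement to an exact (zero-error) equivalence between Charlie's system and the system underlying $\P^{CNOT}$, and then conclude by composability and the triangle inequality. The definition of $\P^{CNOT}$ comes only after the statement, so I commit to the following reading, modelled on the simulator of \Cref{sec:Detector}. $\P^{CNOT}=\SD^{CNOT}\circ\PP$, where $\SD^{CNOT}$ receives a qubit $A_n$ from the untrusted source and prepares an ancilla $B_n$ in $\ket0$. It applies $CNOT_{A_n,B_n}$, outputs $A_n$ at its outer interface, and measures $B_n$ in a basis $\theta'_n$ sampled from $p_\Theta$, obtaining an outcome $r_n$. It then hands to $\PP$ a classical angle $\theta_n$ computed from $(\theta'_n,r_n)$, namely $\theta_n=\theta'_n+r_n\pi$ up to the paper's sign conventions. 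The key step is to show
\begin{equation*}
\SD^C_{source}\approx_0 \SD^{CNOT}
\end{equation*}
as systems with identical interfaces, so that no simulator is needed for this step.

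To prove this, I would mirror the proof of \Cref{thm:detector_single_qubit_systems}. First I fix one round $n$ and an arbitrary distinguisher, which supplies $\ket\tau_{AD_1}$ for a single qubit $A$ and a private register $D_1$. By the computation leading to \Cref{eq:post_CNOT_state}, after the CNOT the joint state is a uniform superposition over $r\in\{0,1\}$ of $R_Z(\pm\theta'_n+r\pi)_A\ket\tau_{AD_1}\otimes Z^r\ket{+_{\theta'_n}}_B$. Measuring $B$ therefore yields a uniform bit $r_n$ and leaves $A$ rotated by the angle associated with $\theta'_n+r_n\pi$. In the $\SD^{CNOT}$ world, the reported angle $\theta_n=\theta'_n+r_n\pi$ thus has density $\tfrac12\big(p_\Theta(\theta)+p_\Theta(\theta-\pi)\big)$. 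By \Cref{ass:p_Theta_periodicity} this equals $p_\Theta(\theta)$, and the qubit returned is exactly $R_Z(\theta_n)$ applied to the input. That is precisely $\SD^C_{source}$: sample $\theta_n\sim p_\Theta$, apply $R_Z(\theta_n)$, and output $\theta_n$. Hence the joint states of the distinguisher's registers together with the output angle coincide in the two worlds.

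One subtlety remains. $p_\Theta$ is a joint density on $[0,2\pi[^N$, so periodicity must be used coordinatewise, with $r$ an independent uniform string. I would write the $N$-round argument directly on the product distribution rather than invoking the interchange law as in \Cref{thm:detector}, because $\G_\Theta$ correlates the rounds. Composing sequentially with $\PP$ then gives $\P^C_{source}\approx_0\P^{CNOT}$.

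Finally, by hypothesis there is a simulator $\SIM'$ with $\P^{CNOT}\approx_\eps \P^A\circ\SIM'$. The triangle inequality for the distinguishing advantage gives $\P^C_{source}\approx_\eps\P^A\circ\SIM'$, so $\P^C_{source}$ $\eps$-securely implements $\P^A$ with the same simulator. I expect the main obstacle to be the middle step: getting the sign and angle conventions right so that the relabelled angle both has law $p_\Theta$ (this is exactly where \Cref{ass:p_Theta_periodicity} enters) and matches the applied rotation $R_Z(\theta_n)$. I also expect care to be needed in handling the correlations across rounds induced by a non-product $p_\Theta$.
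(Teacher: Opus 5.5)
Your proposal is correct and matches the paper's proof. The paper also establishes $\SD^C_{source}\approx_0\SD^{CNOT}$ from the per-round CNOT identity together with \Cref{ass:p_Theta_periodicity}, where $\SD^{CNOT}$ measures the ancilla in basis $-\theta_n$ and announces $\theta_n+x_n\pi$; this settles your sign concern. It then composes with $\PP$ and reuses the simulator of $\P^{CNOT}$ via the triangle inequality. The only difference is presentational: the paper inserts a random-flip system $\F$ after $\G_\Theta$ and applies the interchange law, which handles correlated angles without trouble because $\G_\Theta$ stays outside the per-round systems and the single-round identity holds for every input angle, whereas you do the change of variables on the joint density directly.
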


\begin{theorem} \label{thm:untrusted_EB}
    Under \Cref{ass:pa_structure,ass:p_Theta_periodicity},
    if $\P^{EB}$ $\eps$-securely implements $\P^{A}$ for a given $\eps\in[0, \frac{1}{2}]$,
    then $\P^C_{source}$ also $\eps$-securely implements $\P^A$.
\end{theorem}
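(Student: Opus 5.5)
The plan is to reduce \Cref{thm:untrusted_EB} to \Cref{thm:untrusted_+}: I will show that $\P^{CNOT}$ is perfectly simulable from $\P^{EB}$, and then chain the two security statements through composability. I am assuming here the natural definitions of the two variants, which the paper states only later. In $\P^{CNOT}$, the party receives a qubit from the untrusted source, applies a $CNOT$ onto a fresh $\ket 0$ ancilla, measures the ancilla in a basis determined by $\theta_n$, and forwards the control qubit. In $\P^{EB}$, the untrusted party supplies a two-qubit (possibly entangled) state per round; one half is measured locally in basis $\theta_n$ and the other half is forwarded.

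\textbf{Step 1 (simulation of $\P^{CNOT}$ by $\P^{EB}$).} I will construct a simulator $\SIM'$ such that
\begin{equation*}
\P^{CNOT} \approx_0 \P^{EB}\circ\SIM' .
\end{equation*}
In each round, $\SIM'$ takes the qubit that the adversary would have handed to the $\P^{CNOT}$ party and applies $CNOT$ onto a freshly prepared $\ket 0$ ancilla. It then passes both registers to $\P^{EB}$ as the "untrusted entangled state", relaying the returned qubit to its outer interface.

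Since the $CNOT$ and the ancilla preparation are merely relocated from inside the honest system to inside the simulator, the joint state seen by any distinguisher is identical in the two worlds. This is the same bookkeeping already used in the proof of \Cref{thm:detector_single_qubit_systems}, via \Cref{eq:post_CNOT_state}. There may be a relabelling mismatch between the measurement outcome $r$ and the transmitted angle $-\theta_n+r\pi$ versus $\theta_n$. \Cref{ass:p_Theta_periodicity} absorbs it: because $p_\Theta$ is $\pi$-periodic, the distribution of the effective angle is unchanged, so the classical output passed to $\PP$ has the same law as in $\P^{EB}$. I will argue round by round and then lift to all $N$ rounds by parallel composition and the interchange law, exactly as in \Cref{thm:detector}.

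\textbf{Step 2 (composition).} By hypothesis there is $\S$ with $\P^{EB}\approx_\eps \P^A\circ\S$. Composing sequentially with $\SIM'$, and using that composition does not increase distinguishing advantage, gives
\begin{equation*}
\P^{CNOT}\approx_0 \P^{EB}\circ\SIM' \approx_\eps \P^A\circ(\S\circ\SIM').
\end{equation*}
By the triangle inequality, $\P^{CNOT}$ $\eps$-securely implements $\P^A$ with simulator $\S\circ\SIM'$.

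\textbf{Step 3.} \Cref{thm:untrusted_+}, whose hypotheses \Cref{ass:pa_structure,ass:p_Theta_periodicity} are shared, then yields that $\P^C_{source}$ $\eps$-securely implements $\P^A$.

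The main obstacle is Step 1. The simulation must handle a fully adversarial input qubit, possibly entangled with the distinguisher's memory, and the angle and outcome conventions of $\P^{CNOT}$ and $\P^{EB}$ must match exactly, including the sign of $\theta_n$ and the $r\pi$ shift. I would first write the post-$CNOT$ state in the form of \Cref{eq:post_CNOT_state}, with the distinguisher's register $D_1$ included, and check directly that the forwarded register and the classical outputs coincide. Only if the conventions disagree would I invoke $\pi$-periodicity, through the change of variables $\theta_n\mapsto\theta_n+\pi$ in the sampling.
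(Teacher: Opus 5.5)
Your proposal is correct and follows essentially the same route as the paper: your $\SIM'$ is the paper's system $\R$ that performs the $CNOT$ step, so $\P^{CNOT} = \P^{EB}\circ\R$, hence $\P^{CNOT}\approx_\eps \P^A\circ\SIM^{EB}\circ\R$, after which \Cref{thm:untrusted_+} concludes. With the paper's definitions, $\SD^{EB}$ measures in basis $-\theta_n$ and announces $\theta_n+x_n\pi$, exactly as $\SD^{CNOT}$ does after its $CNOT$, so the identity is an exact structural one and needs neither the explicit post-$CNOT$ state computation nor \Cref{ass:p_Theta_periodicity} at this step (and a sign mismatch $\theta_n\mapsto-\theta_n$ could not have been absorbed by $\pi$-periodicity anyway).
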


In the two following \Cref{sec:thm+,sec:thmEB}, we define the respective variants $\P^{CNOT}$ and $\P^{EB}$ and show the corresponding \Cref{thm:untrusted_+,thm:untrusted_EB}. Most interestingly, we later show  examples in \Cref{sec:examples} on how our theorems naturally apply to most of the common quantum cryptography protocols, effectively showing that Charlie can assume the role of the source in these protocols.

\subsection{Proof of Theorem~\ref{thm:untrusted_+}} \label{sec:thm+}

The motivation of the first variant $\P^{CNOT}$ comes from the fact that preparing a qubit in a desired state can be done by preparing an entangled pair and measuring one of the qubits in the corresponding basis. Pushing the equivalence one step further, such an entangled pair can be obtained from a qubit in the $\ket+$ state by applying a $CNOT$ together with a second qubit in the $\ket0$ state. To prove \Cref{thm:untrusted_+}, we will show that under special assumptions in the protocol that uses the source, the same idea can be used even if the $CNOT$ is applied to an {\em untrusted} qubit that is supposed to be in the $\ket{+}$ state.

We start by defining this new system $\SD^{CNOT}$ that will replace the source in the protocol.

\begin{definition}[$\SD^{CNOT}$] \label{def:SD+}
    $\SD^{CNOT}$ is represented \Cref{fig:SDCNOTandSDCsource}.
    It starts by sampling $(\theta_n)_{n\in[N]} \in [0, 2\pi[^N$ according to $p_\Theta$. Then, successively for $N$ rounds $n\in[N]$, the system performs the following tasks:
    \begin{enumerate}
        \item It receives a qubit \emph{from an untrusted source}. 
        \item \label{item:CNOT_setp} It applies a $CNOT$ gate onto that qubit, with a freshly prepared qubit $Q_2$ in the $\ket0$ state as the target qubit of the gate.
        \item It measures qubit $Q_2$ in the $-\theta_n$ basis, calling the outcome $x_n$.
        \item It outputs qubit $Q_1$.
    \end{enumerate} 
    Finally, $\SD^{CNOT}$ outputs $(\theta'_n)_{n\in[N]}$ where for all $n\in[N]$, $\theta'_n = \theta_n + x_n\pi$ (modulo $2\pi$),  meaning that the output angles are "flipped" for the rounds where $\ket{-_{-\theta_n}}$ has been measured.
\end{definition}

\begin{figure}[!ht]
    \centering 
    \includegraphics[width=0.75\textwidth]{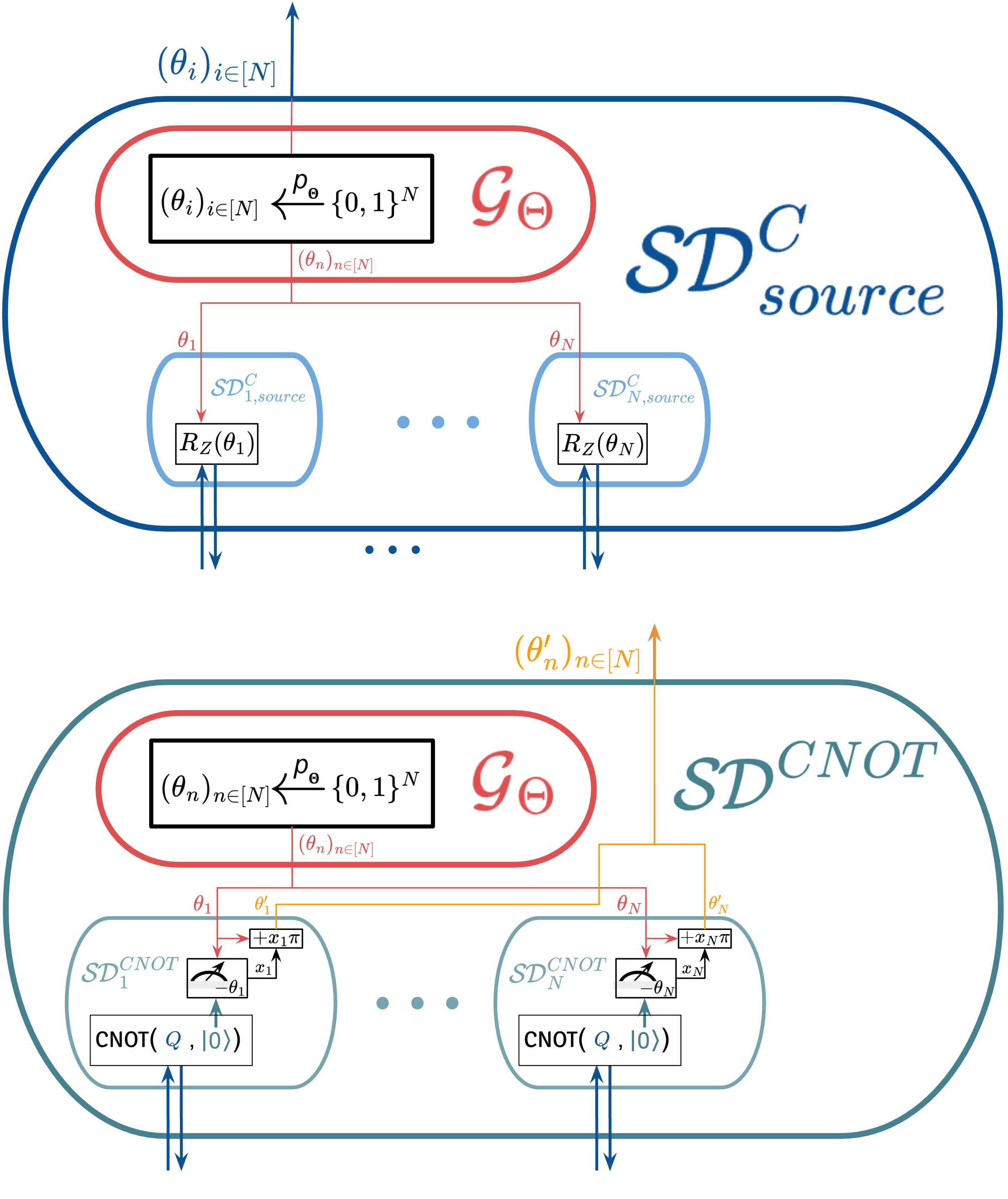}
    \caption{\centering Charlie's state distribution $\SD^{C}_{source}$ and its variant $\SD^{CNOT}$.}
    \label{fig:SDCNOTandSDCsource}
\end{figure}

In order to prove \Cref{thm:untrusted_+}, we first show that the state distribution $\SD^C_{source}$ of Charlie securely implements $\SD^{CNOT}$.

\begin{lemma} \label{thm:SDC=SD+}
    \begin{equation} \label{eq:SDC=SD+}
        \SD^C_{source} \approx_0 \SD^{CNOT}.
    \end{equation}
\end{lemma}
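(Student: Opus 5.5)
The plan is a direct computation, in the same spirit as the proof of \Cref{thm:detector_single_qubit_systems}. I will show that, for any distinguisher $\D$, the joint state of $\D$'s registers and the classical angles output at the user interface is identical in the two cases. The one extra ingredient is \Cref{ass:p_Theta_periodicity}, which is not listed in the statement of \Cref{thm:SDC=SD+} but which I expect is needed. I will interpret it as invariance of $p_\Theta$ under the shift $\theta\mapsto\theta+x\pi$ for every $x\in\{0,1\}^N$, coordinatewise modulo $2\pi$.

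First, fix a single round $n$ and an arbitrary pure input $\ket\tau_{AD} = \alpha\ket{\tau_0}_D\ket0_A+\beta\ket{\tau_1}_D\ket1_A$, where $A$ is the qubit sent by the untrusted source and $D$ is the distinguisher's purification. After the $CNOT$ with the ancilla $Q_2=B$ in $\ket0$, the state is $\alpha\ket{\tau_0}\ket{00}+\beta\ket{\tau_1}\ket{11}$. Rewriting register $B$ in the $-\theta_n$ basis, exactly as in the chain of equalities leading to \Cref{eq:post_CNOT_state} but with $\theta_n$ replaced by $-\theta_n$, gives
\begin{equation*}
\frac{1}{\sqrt2}\sum_{x\in\{0,1\}} R_Z(\theta_n+x\pi)_A\ket\tau_{AD}\, Z^x_B\ket{+_{-\theta_n}}_B .
\end{equation*}
Hence the outcome $x_n$ is uniform and independent of $\tau$, and register $A$ is left in $R_Z(\theta'_n)_A\ket\tau_{AD}$ with $\theta'_n=\theta_n+x_n\pi$. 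So one round of $\SD^{CNOT}$ acts on the input exactly as $R_Z(\theta'_n)$, where the angle $\theta'_n$ is the one later reported. This is precisely what $\SD^C_{source}$ does, except that there the reported angle is $\theta_n$ itself.

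Second, I handle all $N$ rounds at once, allowing the distinguisher to be adaptive across rounds. Because the ancilla measurements are local to $\SD^{CNOT}$ and the angles are only released at the user interface at the end, I can defer the measurements, or equivalently condition on $x=(x_n)_n$ from the start. Conditioned on $\theta$ and $x$, the map applied to whatever the distinguisher sends in round $n$ is $R_Z(\theta_n+x_n\pi)$, regardless of its strategy. The whole interaction is therefore identical to that of $\SD^C_{source}$ run with angle vector $\theta'=\theta+x\pi$, which is also the vector released at the user interface. It remains to compare distributions: $\theta'$ has density $2^{-N}\sum_{x}p_\Theta(\theta'-x\pi)$, and by \Cref{ass:p_Theta_periodicity} this equals $p_\Theta(\theta')$. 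The final states of $\D$ therefore coincide exactly, giving distinguishing advantage $0$.

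The main obstacle is making this second step rigorous. Each round's outcome $x_n$ is generated inside the system while $\D$ may act adaptively, so I must argue carefully that $\theta'$ is fully hidden from $\D$ until the end. I would formalize it as a classical mixture over $(\theta,x)$ of fixed sequences of unitaries. Then $\D$'s final state is $\mathbb{E}_{\theta'}$ of a fixed functional of $\theta'$, and only the law of $\theta'$ matters. I also need the periodicity assumption in its joint multi-coordinate form, and if the paper only intends a per-coordinate version, I would state that strengthening explicitly.
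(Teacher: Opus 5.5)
Your proposal is correct and follows the paper's argument. Both rest on the same single-round computation: the $CNOT$ followed by the $-\theta_n$-basis measurement acts as $R_Z(\theta_n+x_n\pi)$ with uniform $x_n$ and announces $\theta_n+x_n\pi$. Both then use \Cref{ass:p_Theta_periodicity}, which you rightly flag as necessary; the paper's proof also invokes it even though the lemma statement omits it. The only difference is bookkeeping. The paper inserts a random-flip system $\F$ via $\G_\Theta\approx_0\G_\Theta\circ({\big|\big|}_{n\in[N]}\F)$ and reduces to the per-round equivalence $\F\circ\SD^C_{n,source}\approx_0\SD^{CNOT}_n$ by composition. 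You instead do this step globally by computing the law of $\theta+x\pi$, which also makes the treatment of adaptive distinguishers explicit.
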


\begin{proof}
    As for the proof of \Cref{thm:detector_single_qubit_systems}, this proof is inspired by \cite{SecretSharingQline}.
    
    First notice that from the way $\SD^C_{source}$ and $\SD^{CNOT}$ sample their angles and treat separate rounds $n\in[N]$, we can write:
    \begin{align} \label{eq:SDC-def-Gtheta}
        \SD^C_{source} &=  \G_\Theta \circ \Big({\big|\big|}_{n\in[N]} \SD^C_{n, source} \Big)\\
        \label{eq:SD+-def-Gtheta}
        \SD^{CNOT} &=  \G_\Theta \circ \Big({\big|\big|}_{n\in[N]} \SD^{CNOT}_{n} \Big)
    \end{align}
    where $\G_\Theta$ defined \Cref{eq:SDBn_def}, and for any $n\in[N]$, $\SD^C_{n, source}$ and $\SD^{CNOT}_{n}$ single round systems.
   
    Notice further that, given \Cref{ass:p_Theta_periodicity}, if any subset of the angles output by $\G_\Theta$ is "flipped" (i.e by adding $\pi$ modulo $2\pi$), the resulting output angles would keep the same probability distribution and it would be indistinguishable from the original one (without the flips). 
    In particular, by defining $\F$ as a system that receives an angle $\theta\in[0, 2\pi[$ and randomly flips it by randomly sampling a bit $r$ and outputting $\theta+r\pi$, we get 
    \begin{equation}
        \G_\Theta \approx_0  \G_\Theta \circ \Big({\big|\big|}_{n\in[N]} \F \Big),
    \end{equation}
    which together with \Cref{eq:SDC-def-Gtheta} gives
    \begin{align} 
        \SD^C_{source} &\approx_0  \G_\Theta \circ \Big({\big|\big|}_{n\in[N]} \F \Big) \circ
        \Big({\big|\big|}_{n\in[N]} \SD^C_{n, source} \Big)\\
        &\approx_0\G_\Theta \circ \Big({\big|\big|}_{n\in[N]} \F \circ \SD^C_{n, source}\Big). \label{eq:SDC_from_FcircSDCn}
    \end{align}

    Notice that if for all $n\in[N]$
    \begin{equation} \label{eq:SDCn=SD+n}
        \F \circ \SD^C_{n, source} \approx_0 \SD^{CNOT}_{n}
    \end{equation}
    then by composition \Cref{eq:SDC_from_FcircSDCn,eq:SD+-def-Gtheta} coincide, yielding \Cref{eq:SDC=SD+} and concluding the proof.
    The remainder of the proof is dedicated to the proof of \Cref{eq:SDCn=SD+n}. 
    
    Let $n\in[N]$ be fixed and consider a distinguisher which is given black-box access to a system $\S \in \{\F \circ \SD^C_{n,source},\SD^{CNOT}_{n}\}$ and whose goal is to distinguish the two cases.

    Without loss of generality, we can consider that the behavior of the distinguisher amounts to
    \begin{enumerate}
        \item Choosing an angle $\theta_n$ and preparing a pure state $\ket\tau_{A,D}$ in qubit register $A$, and a private register $D$.
        \item Sending $\theta_n$ and register $A$ to $\S$ for it to apply its operation.
        \item Getting back register $A$ from $\S$, as well as the announced angle $\theta'_n$ in a register $T$.
        \item Applying a final measurement $\M_\D$ of single bit output $d$ on registers $A$, $D$, $T$.
    \end{enumerate}

    \noindent We respectively call $\psi^{CNOT}$ and $\psi^C$ the state of registers $A, D$ between steps 3 and 4 if $\S = \SD^{CNOT}_{n}$ or $\S = \F \circ \SD^C_{n,source}$. We will show that $\psi^{CNOT} = \psi^C$, thus proving that the measurement $\M_\D$ at step 4 cannot distinguish the two states better than a random guess.

    If $\S$ is $\F \circ \SD^C_{n,source}$, then by definition
    \begin{equation} \label{eq:psiC_result}
        \psi^{C} = \sum_{r\in\{0,1\}}R_Z(\theta_n+r\pi)_{A} \ketbra{\tau}_{AD} R_Z(\theta_n+r\pi)_{A}^\dagger \otimes \ketbra{\theta_n+r\pi}_T
    \end{equation}
    where the sum over $r\in\{0,1\}$ comes from the random angle flip of system $\F$.

    We now suppose that $\S = \SD^{CNOT}_{n}$ and write 
    $\alpha \ket{\tau_{0}}_D \ket{0}_{A} + \beta \ket{\tau_{1}}_D \ket{1}_{A} := \ket\tau_{AD}$. We call $B$ the register containing the target qubit of the $CNOT$ operation applied by $\SD^{CNOT}_{n}$ (the qubit called $Q_2$ in \Cref{def:SD+}).
    The state of registers $A, B, D$ after the $CNOT$ gate is\footnote{Up to a slight notation change, this is \Cref{eq:I_D_otimes_CNOT_AB}. The fact that we encounter once again this equation comes from the strong similarities between $\SD^{CNOT}_{n}$ (studied here) and $\SD^B_n \circ \SIM_n$ (studied in the proof of \Cref{thm:detector_single_qubit_systems}) }
    \begin{align*}
        \ket{\psi^{CNOT}}_{ADB}
        &
        =
        &&\mkern-82mu(\mathds{I}_D \otimes CNOT_{AB})
        (\ket\tau_{AD} \otimes \ket{0}_B)\\
        &= 
        &&\mkern-82mu\alpha \ket{\tau_0}_D 
        \ket{0}_A \ket{0}_B 
        + 
        \beta \ket{\tau_1}_D 
        \ket{1}_A \ket{1}_B\\
        &=  
        \frac1{\sqrt2} \Big[
            &&\mkern-46mu\Big(
                \alpha \ket{\tau_{0}}_D \ket{0}_{A} 
                + 
                e^{\theta_n i}\beta \ket{\tau_{1}}_D\ket{1}_{A} 
            \Big) 
            \ket{+_{-\theta_n}}_{B}\\
            &&&\mkern-68mu+\Big(
                \alpha \ket{\tau_{0}}_D \ket{0}_{A} 
                - 
                e^{\theta_n i}\beta \ket{\tau_{1}}_D\ket{1}_{A}
            \Big) 
            \ket{-_{-\theta_n}}_{B}
        \Big]
        \\
        &=  
        \frac1{\sqrt2} 
        \Big[
            &&\mkern-46muR_Z(\theta_n)_{A} \ket{\tau}_{AD} \ket{+_{-\theta_n}}_{B}
            +
            R_Z(\theta_n + \pi)_{A} \ket{\tau}_{AD} \ket{-_{-\theta_n}}_{B}
        \Big]\\
        &=  
        \frac1{\sqrt2} 
        &&\mkern-51mu\sum_{x^B\in\{0,1\}}
            R_Z(\theta_n+x^B\pi)_{A} \ket{\tau}_{AD} Z^{x^B}_B\ket{+_{-\theta_n}}_{B}.
    \end{align*}

    As a consequence, the measurement by $\S = \SD^{CNOT}_{n}$ of register $B$ in basis $-\theta_n$ gives a uniformly random classical outcome $x^B$ and collapses the state of registers $A, D$ onto $R_Z(\theta_n+x^B\pi)_{A} \ket{\tau}_{AD}$. 
    
    Thus, since $\SD^{CNOT}_{n}$ announces the angle $\theta_n' = \theta_n+x^B\pi$,
    \begin{align*}
        \psi^{CNOT} &= \frac12\sum_{x^B\in\{0, 1\}} R_Z(\theta_n+x^B\pi)_{A} \ketbra{\tau}_{AD} R_Z(\theta_n+x^B\pi)_{A}^\dagger
        \otimes \ketbra{\theta_n+x^B\pi}
    \end{align*}
    which is equal to $\psi^{C}$ (see \Cref{eq:psiC_result}).

    As a result, $\M_\D$ and thus the distinguisher $\D$ cannot distinguish $\F \circ \SD^C_{n, source}$ from $\SD^{CNOT}_{n}$ better than a random guess, and \Cref{eq:SDCn=SD+n} follows.
\end{proof}

Similarly to the protocol $P^A$ which is defined as $\SD^A \circ \PP$, we define the variant protocol $\P^{CNOT}$ as $\SD^{CNOT} \circ \PP$ where $\SD^{CNOT}$ is defined below.
Using \Cref{thm:SDC=SD+}, we can now conclude the proof of \Cref{thm:untrusted_+}.
\begin{proof}[Proof of \Cref{thm:untrusted_+}]
The assumption of \Cref{thm:untrusted_+} on $\P^{CNOT}$ indeed implies that there exists a simulator $\SIM^{CNOT}$ such that $\P^{CNOT} \approx_\eps \P^{A} \circ \SIM^{CNOT}$.
Furthermore, composing the post processing system onto both sides of \Cref{eq:SDC=SD+} gives $\P^C_{source} \approx_0 \P^{CNOT}$.
As a consequence, $\P^C_{source} \approx_\eps \P^{A} \circ \SIM^{CNOT}$, meaning $\P^C_{source}$ $\eps$-securely implements $\P^A$. This concludes the proof of \Cref{thm:detector}
\end{proof}

\subsection{Proof of Theorem \ref{thm:untrusted_EB}} \label{sec:thmEB}

While \Cref{thm:untrusted_+} provides a sufficient condition for Charlie's protocol $\P^C = \SD^C\circ \PP$ to securely implement $\P^A$, the assumptions for it to work are rather unnatural and, to the best of our knowledge, we do not know any protocol for which the security proof directly implies this assumption. Thus, in this section, we define an entanglement-based variant of \Cref{thm:untrusted_+}, namely \Cref{thm:untrusted_EB}, designed for a more natural integration into cryptographic frameworks.
These results are strict implications of~\Cref{thm:untrusted_+} and can be seen as operational specializations.

We define below the variant $\SD^{EB}$ of the state distribution. It is similar to $\SD^{CNOT}$, but instead of receiving an untrusted qubit that is supposed to be a $\ket+$ state and entangling it with another qubit,  $\SD^{EB}$ directly receives a pair of qubits that are supposed to be an EPR pair.

\begin{definition}[$\SD^{EB}$]
    $\SD^{EB}$ is called the "entanglement-based" variant of $\SD^A$. As $\SD^A$, it starts by sampling $(\theta_n)_{n\in[N]} \in [0, 2\pi[^N$ according to $p_\Theta$. Then, successively for $N$ rounds $n\in[N]$, the system performs the following tasks:
    \begin{enumerate}
        \item It receives 2 qubits \emph{from an untrusted source}. 
        \item It measures one qubit in the $-\theta_n$ basis, calling the outcome $x_n$.
        \item It outputs the second qubit.
    \end{enumerate} 
    Finally, $\SD^{EB}$ outputs $(\theta'_n)_{n\in[N]}$ where for all $n\in[N]$, $\theta'_n = \theta_n + x_n\pi$ (modulo $2\pi$),  meaning that the output angles are "flipped" for the rounds where $\ket{-_{-\theta_n}}$ has been measured.
\end{definition}

We can then define $\P^{EB}=\SD^{EB}\circ\PP$, and we can prove \Cref{thm:untrusted_EB} that we stated previously.

\begin{proof}[Proof of \Cref{thm:untrusted_EB}]
    Note that $\SD^{EB}$ adopts the same behavior on the qubit pairs as the previously defined $\SD^{CNOT}$ after its $CNOT$ operation.
    As a consequence, by writing $\R$ a system receiving single qubits and performing step \ref{item:CNOT_setp} of \Cref{def:SD+}, 
    \begin{equation}
        \SD^{CNOT} = \SD^{EB} \circ \R
    \end{equation}
    which, by composition with the post-processing system $\PP$, gives \begin{equation}\label{eq:SDEB_def_from_SD+}
        \P^{CNOT} = \P^{EB} \circ \R.
    \end{equation}

    Considering the assumption of \Cref{thm:untrusted_EB} on $\P^{EB}$ which implies that there exists a simulator $\SIM^{EB}$ such that $\P^{EB} \approx_\eps  \P^{A} \circ \SIM^{EB}$, \Cref{eq:SDEB_def_from_SD+} gives $\P^{CNOT} \approx_\eps \P^{A} \circ \SIM^{EB} \circ \R$, meaning that $\P^{CNOT}$ $\eps$-securely implements $\P^{A}$.
    \Cref{thm:untrusted_+} concludes the proof.
\end{proof}

\section{Applications} \label{sec:examples}

We now describe examples of protocols that meet the requirement of \Cref{thm:detector,thm:untrusted_EB}, and thus for which the trusted preparation can be securely replaced by an untrusted one followed by a rotation.

\subsection{Quantum Key Distribution}
The crown jewel of quantum cryptography, Quantum Key Distribution (QKD) protocols, allow two parties Alice and Bob to establish a shared secret using quantum communication devices based only on authenticated classical channels. 
The most studied variants are entanglement-based QKD~\cite{Portman_Renner14,TL17} and prepare-and-measure (BB84) QKD~\cite{TL17}.

In entanglement-based QKD, we only require trusted measurements and \Cref{thm:detector} thus implies that both players' views of the protocol can equivalently be implemented using the device of Charlie. 

In prepare-and-measure QKD, however, Alice must send qubits to Bob, prepared by a trusted process, who then measure them also using a trusted detector. While we can again use \Cref{thm:detector} to replace Bob's device straightforwardly, we argue in the following that most prepare-and-measure protocols also satisfy the requirement for \Cref{thm:untrusted_EB} to apply, leading to the security of an implementation of the state distribution of Alice using $\SD^C_{source}$.

We notice that in most common prepare-and-measure QKD protocols, qubits are sent in a random state among the eigenstates of the $Z$ or $X$ basis, and thus these protocols satisfy \Cref{ass:p_Theta_periodicity,ass:pa_structure} up to the appropriate basis change. 

In the situation where $\P^A$ is Alice's local view of prepare-and-measure QKD, $\P^{EB}$ is a system that receives qubit pairs, measures one qubit of each pair (uniformly at random in one of two mutually unbiased bases), outputs the other qubits and proceeds to QKD's post-processing. The only difference between $\P^{EB}$ and Alice's local view of entanglement-based QKD is thus the fact that for each qubit pair, $\P^{EB}$ receives one more qubit and directly outputs it without further treatment. Of course this change does not affect the security and $\P^{EB}$ is a secure implementation of Alice's said view\footnote{Formally, $\P^{EB}$ is equivalent ($\approx_0$) to Alice's local view of entanglement-based QKD up to a simulator (on entanglement-based QKD's side) that receives qubits and immediately outputs them.}.

Since Alice's view is the only difference between the entanglement-based and prepare-and-measure versions, Alice's view of entanglement-based QKD and thus $\P^{EB}$ securely implement $\P^A$.

As a consequence, \Cref{thm:untrusted_EB} applies to prepare-and-measure QKD thus showing that Charlie can securely assume the role of Alice using $\P^C_{source}$. We note that, combined with \Cref{thm:detector} this achieves a concurrent proof for the existence of a (composably) secure protocol for QKD on the Qline, a result previously established by Grilo \textit{et al.} \cite{SecretSharingQline}.

\subsection{Quantum Oblivious Transfer}
Quantum Oblivious Transfer (QOT) is a cryptographic protocol between a sender (Alice) and a receiver (Bob)~\cite{QOT}. Traditionally, QOT involves Alice sending single qubits each in a random state among $\ket{0}$, $\ket{1}$, $\ket{+}$ or $\ket{-}$, and Bob measuring these qubits. Alice and Bob are thus naturally implemented using a qubit source and qubit detector, respectively.
This protocol satisfies (up to an appropriate basis change on the states of the qubits) \Cref{ass:pb_structure,ass:pa_structure,ass:p_Theta_periodicity}. As a consequence, \Cref{thm:detector} implies that Charlie can implement the local view of the receiver with $\SD^C$.
We argue in the following that Oblivious Transfer also satisfies the requirement for \Cref{thm:untrusted_EB} to apply, leading to the security of an implementation of the state distribution of Alice using $\SD^C_{source}$.

\medskip

In order to show that \Cref{thm:untrusted_EB} applies, we will refer to the security proof of \cite{bf12}. On a high level, the security of QOT is proven by equivalence to an "entanglement-based" version of the protocol named QOT$^*$. In this version, Alice prepares EPR pairs, sends one qubit of each pair to Bob and proceeds to the post-processing of standard Quantum OT, checking Bob's commitments by comparison with the measurement outcomes of the qubit she kept.
The security of QOT$^*$ is proven by showing that the measurements of Alice can be seen as a sampling strategy, which collapses the state of Alice's remaining qubits onto a particular form, and from which we can bound the entropy of Bob on Alice's secret.

We note that when formalized this way, by defining $\P^A$ as Alice's protocol in the original, prepare-and-measure QOT, $\P^{EB}$ is almost exactly Alice's protocol for QOT$^*$. Formally, up to a basis change, the differences are:
\begin{itemize}
    \item A delayed measurement (which commutes with Bob's possible actions and thus does not change the proof).
    \item The preparation of the EPR pairs, which is trusted in QOT$^*$ and not in $\P^{EB}$.
\end{itemize}

We remark, however, that while the authors of this work considered such a trusted preparation setup for simplicity, their security proof of QOT$^*$ still holds in the case of an untrusted preparation (as in $\P^{EB}$). Indeed, the quantum state shared between Alice and Bob is treated as an arbitrary state (potentially created by the adversary), and the collapsing effect of the sampling strategy \emph{does not depend on the initial state of the qubits}.
As a consequence, $\P^{EB}$ also securely implements QOT$^*$ and equivalently QOT. \Cref{thm:untrusted_EB} thus shows that Charlie can securely assume the role of Alice in Quantum Oblivious Transfer using $\P^C_{source}$.

\subsection{Other Protocols}

Following the two primitives studied above, we conjecture that several other protocols fall within the scope of application of our framework and admit a secure implementation using single-qubit rotation devices instead of single-qubit sources or detectors. In particular, we expect \Cref{thm:detector} (concerning qubit measurements) to apply to the vast majority of protocols, and our \Cref{thm:untrusted_EB} to likely apply to protocols that incorporate some sort of self-testing or consistency checking phase like the commit-and-reveal step of OT or the tests of QKD.
Additionally, one could consider artificially adding such a testing step to protocols that do not directly fit our framework in order to derive a modified protocol that admits, as a consequence of either \Cref{thm:untrusted_+} or \Cref{thm:untrusted_EB}, a secure implementation based on single-qubit rotation devices.

Among the most promising candidates, we identify the following primitives:

\begin{itemize}
    \item \textbf{Quantum Bit commitment (QBC): }
    QBC is a cryptographic primitive between a committer and a receiver for which protocols have been proven secure in the Noisy Storage and Bounded Storage Models~\cite{QBC_BSM,QBC_NSM,QBC_BNSM,grilo2024roundcomplexity}. It has been studied under both prepare-and-measure scenarios, where either the committer or the receiver requires a qubit source while the other party uses a measurement device and we conjecture that this can be used to show that \Cref{thm:untrusted_EB} applies. 
    \item \textbf{Delegated Quantum Computing (DQC): }
    DQC is a primitive where a client wants to delegate a (quantum) computation to an untrusted quantum computer. In the most standard protocol, the client sends encrypted single qubit states that the server then uses to build a graph state and compute the desired task in the Measurement-Based Quantum Computing framework. 
    In fact, Polacchi \textit{et al.}~\cite{polacchi2023multi} have already introduced a secure protocol for (multi-client) DQC using a single-qubit rotation device on the client side. Showing that our theorem applies to the original protocol would effectively consist of an alternative security proof for the security of such a DQC protocol where the client uses a single-qubit rotation device instead of a single-qubit source.
\end{itemize}

\newpage
\bibliography{MyRef.bib}

@INPROCEEDINGS{doosti-hanouz23establishing,
  author={Doosti, Mina and Hanouz, Lucas and Marin, Anne and Kashefi, Elham and Kaplan, Marc},
  booktitle={2024 International Conference on Quantum Communications, Networking, and Computing (QCNC)}, 
  title={Establishing Shared Secret Keys on Quantum Line Networks: Protocol and Security}, 
  year={2024},
  volume={},
  number={},
  pages={176-183},
  doi={10.1109/QCNC62729.2024.00035}}

@article{polacchi2023multi,
  title={Multi-client distributed blind quantum computation with the Qline architecture},
  author={Polacchi, Beatrice and Leichtle, Dominik and Limongi, Leonardo and Carvacho, Gonzalo and Milani, Giorgio and Spagnolo, Nicol{\`o} and Kaplan, Marc and Sciarrino, Fabio and Kashefi, Elham},
  journal={Nature Communications},
  volume={14},
  number={1},
  pages={7743},
  year={2023},
  publisher={Nature Publishing Group UK London}
}

@article{CPEW17,
	author = {Marco Clementi and Anna Pappa and Andreas Eckstein and Ian A. Walmsley and Elham Kashefi and Stefanie Barz},
	journal = {Phys. Rev. A},
	number = {062317},
	title = {Classical multiparty computation using quantum resources},
	volume = {96},
	year = {2017}}

@inproceedings{AbstractCrypto,
	author = {Ueli Maurer and Renato Renner},
	booktitle = {Proceedings of Innovations in Computer Science, ICS 2010, Tsinghua University Press},
	pages = {1-21},
	title = {Abstract Cryptography},
	year = {2011}}

@article{TL17,
   title={A largely self-contained and complete security proof for quantum key distribution},
   volume={1},
   ISSN={2521-327X},
   url={http://dx.doi.org/10.22331/q-2017-07-14-14},
   DOI={10.22331/q-2017-07-14-14},
   journal={Quantum},
   publisher={Verein zur Forderung des Open Access Publizierens in den Quantenwissenschaften},
   author={Tomamichel, Marco and Leverrier, Anthony},
   year={2017},
   month=jul, 
   pages={14} 
}

@article{BB84,
title = {Quantum cryptography: Public key distribution and coin tossing},
journal = {Theoretical Computer Science},
volume = {560},
pages = {7-11},
year = {2014},
issn = {0304-3975},
doi = {https://doi.org/10.1016/j.tcs.2014.05.025},
url = {https://www.sciencedirect.com/science/article/pii/S0304397514004241},
author = {Charles H. Bennett and Gilles Brassard}
}

@inproceedings{BF12,
  title={Sampling in a quantum population, and applications},
  author={Bouman, Niek J and Fehr, Serge},
  booktitle={Annual Cryptology Conference},
  year={2010},
  organization={Springer}
}

@article{Portman_Renner14,
  title={Cryptographic security of quantum key distribution},
  author={Christopher Portmann and Renato Renner},
  journal={ArXiv},
  year={2014},
  volume={abs/1409.3525},
  url={https://api.semanticscholar.org/CorpusID:14053576}
}

@article{SecretSharingQline,
      title={Security of a secret sharing protocol on the Qline}, 
      author={Alex B. Grilo and Lucas Hanouz and Anne Marin},
      year={2025},
      journal={ArXiv},
      volume={abs/2504.19702},
      eprint={2504.19702},
      archivePrefix={arXiv},
      primaryClass={quant-ph},
      url={https://arxiv.org/abs/2504.19702}, 
}

@ARTICLE{Qline_implementation_berlin,
  author={Sena, Matheus and Harder, Georg and Döring, Ronny and Braun, Ralf-Peter and Ritter, Michaela and Holschke, Oliver and Kaplan, Marc and Geitz, Marc},
  journal={IEEE Photonics Journal}, 
  title={Deploying the Qline System for a QKD Metropolitan Network on the Berlin OpenQKD Testbed}, 
  year={2025},
  volume={17},
  number={1},
  pages={1-11},
  doi={10.1109/JPHOT.2024.3516138}}

@article{Schmid_2005,
   title={Experimental Single Qubit Quantum Secret Sharing},
   volume={95},
   ISSN={1079-7114},
   url={http://dx.doi.org/10.1103/PhysRevLett.95.230505},
   DOI={10.1103/physrevlett.95.230505},
   number={23},
   journal={Physical Review Letters},
   publisher={American Physical Society (APS)},
   author={Schmid, Christian and Trojek, Pavel and Bourennane, Mohamed and Kurtsiefer, Christian and Żukowski, Marek and Weinfurter, Harald},
   year={2005},
   month=dec }

@inproceedings{QOT,
author = {Bennett, Charles and Brassard, Gilles and Crépeau, Claude and Skubiszewska, Marie-Hélène},
year = {1991},
month = {01},
pages = {351-366},
title = {Practical Quantum Oblivious Transfer.},
volume = {576}
}

@article{Qenclave,
   title={QEnclave - A practical solution for secure quantum cloud computing},
   volume={8},
   ISSN={2056-6387},
   url={http://dx.doi.org/10.1038/s41534-022-00612-5},
   DOI={10.1038/s41534-022-00612-5},
   number={1},
   journal={npj Quantum Information},
   publisher={Springer Science and Business Media LLC},
   author={Ma, Yao and Kashefi, Elham and Arapinis, Myrto and Chakraborty, Kaushik and Kaplan, Marc},
   year={2022},
   month=Nov }

@misc{KLMO24,
      title={Verification of Quantum Computations without Trusted Preparations or Measurements}, 
      author={Elham Kashefi and Dominik Leichtle and Luka Music and Harold Ollivier},
      year={2024},
      eprint={2403.10464},
      archivePrefix={arXiv},
      primaryClass={quant-ph},
      url={https://arxiv.org/abs/2403.10464}, 
}

@article{QBC_BSM,
author = {Damg\r{A}rd, Ivan B. and Fehr, Serge and Salvail, Louis and Schaffner, Christian},
title = {Cryptography in the Bounded-Quantum-Storage Model},
journal = {SIAM Journal on Computing},
volume = {37},
number = {6},
pages = {1865-1890},
year = {2008},
doi = {10.1137/060651343}
}

@article{QBC_NSM,
  title = {Cryptography from Noisy Storage},
  author = {Wehner, Stephanie and Schaffner, Christian and Terhal, Barbara M.},
  journal = {Phys. Rev. Lett.},
  volume = {100},
  issue = {22},
  pages = {220502},
  numpages = {4},
  year = {2008},
  month = {Jun},
  publisher = {American Physical Society},
  doi = {10.1103/PhysRevLett.100.220502},
  url = {https://link.aps.org/doi/10.1103/PhysRevLett.100.220502}
}

@article{QBC_BNSM,
   title={Unconditional Security From Noisy Quantum Storage},
   volume={58},
   ISSN={1557-9654},
   url={http://dx.doi.org/10.1109/TIT.2011.2177772},
   DOI={10.1109/tit.2011.2177772},
   number={3},
   journal={IEEE Transactions on Information Theory},
   publisher={Institute of Electrical and Electronics Engineers (IEEE)},
   author={Konig, Robert and Wehner, Stephanie and Wullschleger, Jürg},
   year={2012},
   month=Mar, pages={1962–1984} }

@misc{grilo2024roundcomplexity,
      title={The Round Complexity of Proofs in the Bounded Quantum Storage Model}, 
      author={Alex B. Grilo and Philippe Lamontagne},
      year={2024},
      eprint={2405.18275},
      archivePrefix={arXiv},
      primaryClass={quant-ph},
      url={https://arxiv.org/abs/2405.18275}, 
}

\end{document}